\newcommand{\be}[0]{\begin{equation}}
\newcommand{\ee}[0]{\end{equation}}
\definecolor{darkgreen}{cmyk}{1,0,1,.2}
\definecolor{m}{rgb}{1,0.1,1}
\definecolor{green}{cmyk}{1,0,1,0}
\definecolor{test}{rgb}{1,0,0}
\definecolor{cmyk}{cmyk}{0,1,1,0}
\newcommand\dirac{\slash\!\!\!\partial}
\newcommand{\vectnu}{{\boldsymbol{\nu}}}
\newcommand{\bea}{\begin{eqnarray}}
\newcommand{\eea}{\end{eqnarray}}
\newcommand{\beq}{\begin{equation}}
\newcommand{\eeq}{\end{equation}}
\newcommand\A{{\bf A}}
\newcommand\B{{\bf B}}
\newcommand\G{\Gamma_{g,\vectnu}}
\newcommand\wt{\widetilde}
\newtheorem{theorem}{Theorem}[section]
\newtheorem{lemma}[theorem]{Lemma}
\newtheorem{proposition}[theorem]{Proposition}
\newtheorem{corollary}[theorem]{Corollary}
\theoremstyle{remark}
\newtheorem{definition}[theorem]{Definition}
\newtheorem{remark}[theorem]{Remark}
\numberwithin{equation}{section}
\numberwithin{theorem}{section}
\newcommand{\hyp}{{\mathbb H}}
\newcommand{\slr}{{\mathbf {PSL}}(2,\R)}
\newcommand{\cl}{{\mathcal L}}
\newcommand\cA{\mathcal{A}}
\newcommand\cE{\mathcal{E}}
\newcommand\cF{\mathcal{F}}
\newcommand\cK{\mathcal{K}}
\newcommand\cM{\mathcal{M}}
\newcommand\cL{\mathcal{L}}
\newcommand\cP{\mathcal{P}}
\newcommand\cT{\mathcal{T}}
\newcommand\cV{\mathcal{V}}
\newcommand\sP{\mathscr{P}}
\newcommand{\Z}{\ensuremath{\mathbb Z}}
\newcommand{\R}{\ensuremath{\mathbb R}}
\newcommand{\CC}{{\mathbb C}}
\newcommand{\NN}{{\mathbb N}}
\newcommand{\QQ}{{\mathbb Q}}
\newcommand{\RR}{{\mathbb R}}
\newcommand{\TT}{{\mathbb T}}
\newcommand{\ZZ}{{\mathbb Z}}
\newcommand{\HH}{{\mathbb H}}
\begin{document}

\title
{Topological phases on the hyperbolic plane: fractional bulk-boundary correspondence}

\author[V Mathai]{Varghese Mathai}

\address[Varghese Mathai]{
Department of Pure Mathematics,
School of  Mathematical Sciences, 
University of Adelaide, 
Adelaide, SA 5005, 
Australia}

\email{mathai.varghese@adelaide.edu.au}

\author[G.C.Thiang]{Guo Chuan Thiang}

\address[Guo Chuan Thiang]{
Department of Pure Mathematics,
School of  Mathematical Sciences, 
University of Adelaide, 
Adelaide, SA 5005, 
Australia}

\email{guochuan.thiang@adelaide.edu.au}

\begin{abstract}
We study topological phases in the hyperbolic plane using noncommutative geometry and T-duality, and show that fractional versions of the quantised indices for integer, spin and anomalous quantum Hall effects can result. Generalising models used in the Euclidean setting, a model for the bulk-boundary correspondence of fractional indices is proposed, guided by the geometry of hyperbolic boundaries.
\end{abstract}

\maketitle


\section{Introduction}
Most of the existing work on topological phases in condensed matter physics deals implicitly with a Euclidean geometry. This is the case both in the well-studied integer quantum Hall effect (IQHE) and in the more recently discovered phases such as the Chern insulator \cite{CZ}, time-reversal invariant topological insulators \cite{Hsieh,Konig}, topological superconductors, and crystalline topological phases. This is partly due to the availability of the classical Bloch--Floquet theory, based upon Fourier transforming the \emph{abelian} (sub)group $\ZZ^d$ of discrete translational symmetries of a crystal lattice in Euclidean space. This leads to the topological band theory paradigm, where vector bundles over the Brillouin torus are constructed out of spectral projections of $\ZZ^d$-invariant Hamiltonians, which then determine topological invariants such as Chern classes, Kane--Mele invariants \cite{KM}, $K$-theory classes \cite{Kitaev,FM,Thiang} etc. 

Besides the problem of cataloguing \emph{bulk} topological phases, there is also the important issue of modelling mathematically the fundamental physical concept of the \emph{bulk-boundary correspondence}. This roughly says that in passing from the bulk of a material hosting a topologically non-trivial gapped phase to the external ``topologically trivial vacuum'', the gapped condition is violated at the boundary (``zero modes'' fill the gap) and the change in bulk topological indices is furthermore recorded in the form of a boundary topological invariant. For the IQHE, this correspondence is the equality of bulk Hall conductance and (direct) edge conductance, which was proved rigorously in the noncommutative geometry (NCG) framework in \cite{Kellendonk1}. Indeed, the non-triviality of bulk topological phases is often deduced experimentally from the detection of boundary gapless modes.

In this paper, we study topological phases in the \emph{hyperbolic plane}, propose a bulk-boundary correspondence of the resulting topological indices, which may be \emph{fractional}, and show its persistence in the presence of certain types of disorder. We also show that with time-reversal symmetry, there is an interesting $\ZZ_2$ ``hyperbolic plane topological insulator'', characterised by a hyperbolic version of the Kane--Mele invariant \cite{KM} originally introduced for the (Euclidean plane) quantum spin Hall effect. 

Our main motivation comes from the possibility of realising fractional indices in the hyperbolic setting, as a result of certain orbifold Euler characteristics taking \emph{rational} rather than integer values, as explained by Marcolli--Mathai in \cite{Marcolli01,Marcolli06}. Physically, the non-Euclidean geometry is supposed to provide an effective model for electron-electron interactions while formally staying within the single-particle framework. Besides providing a predictive model for the fractional quantum Hall effect that can be compared with experiments \cite{Marcolli06}, a recent work of Marcolli--Seipp \cite{Marcolli17} showed that one can even obtain interesting composite fermions and anyon representations by considering symmetric products of Riemann surfaces. The extension of the bulk-boundary correspondence principle from integer to fractional indices, and from Euclidean to other geometries, is therefore of great interest.

We will use tools from NCG, which were first deployed by Bellissard et al to analyse the IQHE \cite{Bellissard}. In that setting, we recall that a mild form of noncommutativity occurred in the sense that quantum Hall Hamiltonians enjoyed only \emph{magnetic translational symmetry}, so that a Brillouin torus is unavailable in the strict sense. An important insight is that the Kubo formula for Hall conductivity, obtainable in the commutative case as an integral of the Chern class of the valence bundle over the Brillouin torus, could be understood as a pairing of a \emph{geometrically} determined cocycle with the $K$-theory class of the Fermi projection. The effect of disorder in producing plateaux could also be accounted for in a rigorous way.

Fortuitously, the NCG framework is very general and allows us to go further and dispense with the Euclidean space paradigm altogether. In hyperbolic geometry, lattice translation symmetry is no longer an abelian group $\ZZ^2$, but some noncommutative surface group $\Gamma_g$, so the classical Bloch--Floquet theory and momentum space are unavailable even in the absence of a magnetic field. Nevertheless, there is a standard NCG prescription of taking a group algebra or crossed product algebra as the noncommutative ``momentum space'', with respect to which bulk topological phases can be defined.

A more serious difficulty arises when trying to formulate a bulk-boundary correspondence in the hyperbolic plane. Our solution is novel: to exploit the idea of \emph{T-duality} to circumvent this difficulty. In the Euclidean case, bulk-to-boundary maps are $K$-theoretic index maps associated to a Toeplitz-like extension \cite{Pimsner} of the bulk algebra of observables with the boundary algebra, where the extended algebra encodes some type of half-space boundary conditions \cite{Kellendonk1}. Abstractly, such extensions may be studied using $K$-homology or Kasparov theory --- indeed a $KK$-theoretic formulation was worked out in \cite{Bourne}. In \cite{MT1,MT2}, we showed that under appropriate T-duality transformations (a geometric Fourier transform closely related to the Baum--Connes assembly map), the Euclidean space bulk-boundary maps simplify into geometric restriction-to-boundary maps, and these results were extended to Nil and Solv geometries in \cite{HMT1,HMT2}. 

In these cases, there is a translational symmetry transverse to the boundary which, together with the longitudinal symmetries of the boundary, recover the symmetries of the bulk. Then our earlier results show that the Toeplitz-like extension correctly encodes the geometric bulk-boundary relationship T-dually. In the hyperbolic plane, transversality properties of ``hyperbolic translations'' are more complicated (Fig.\ \ref{fig:intersection}), and half-space boundary conditions for tight-binding models become very difficult to give explicitly, e.g.\ the atomic sites are no longer simply labelled by a set of integers. Then it becomes \emph{essential} to utilise T-duality to analyse on the geometric side (cf.\ motivation for the Baum--Connes conjecture). Another technical difference in the hyperbolic setting is that the Pimsner--Voiculescu (PV) exact sequence, relevant for computing the $K$-theory of crossed products with $\ZZ^d$, is not available for computing the twisted crossed product with a surface group. Instead, we use the Kasparov spectral sequence that generalises the PV sequence, presenting the computations in the Appendix.

{\bf Outline and main results.}
We first review the quantum Hall effect on the hyperbolic plane and establish some notation and facts about surface and Fuchsian groups in Section \ref{section:IQHE}. In Section \ref{sect:FM}, we introduce noncommutative T-duality for Riemann surfaces, and compute its effects on $K$-theory generators. In Section \ref{section:extensions}, we recall the relation between $C^*$-algebra extensions and $KK$-theory, and introduce the important geometrical notion of a 1-dimensional boundary separating the hyperbolic plane into a bulk and a vacuum region (Fig.\ \ref{fig:hypercycle}-\ref{fig:intersection}). 

Armed with the above tools, we are able to carry out our main objective --- to obtain a bulk-boundary correspondence of fractional indices --- by using the universal coefficients theorem to design a suitable extension that induces a bulk-boundary map. This construction is relevant both for the empirically verified fractional quantum Hall effect, as well as our proposed fractional version of Chern insulators/anomalous Hall effect and Kane--Mele invariants in Section \ref{section:hyperbolicKM}. A central result of this paper is that this bulk-boundary map is T-dual to the geometric restriction-to-boundary map (Theorem \ref{BBCsimplify}).
Pairings with cyclic cohomology are also analysed in Section \ref{section:cyclic}, leading to a fractional bulk-boundary correspondence. In Section \ref{section:generalisations}, we extend these constructions and results to include the effect of disorder (Proposition \ref{BBCsimplifydisordered}).

\tableofcontents

\section{Overview of the quantum Hall effect on the hyperbolic plane}\label{section:IQHE}
The hyperbolic plane analogue of the quantum Hall effect was initially studied in \cite{Comtet, CHMM, CHM}. Quantisation of the Hall conductance followed from similar arguments to those in the IQHE \cite{Bellissard}, with the added bonus that \emph{fractional} indices could be achieved. We begin by reviewing the construction of
magnetic Hamiltonians in a continuous
model with a background hyperbolic geometry term \cite{CHMM}.

One model of
   the hyperbolic plane is the upper half-plane $\hyp$ 
equipped with its usual Poincar\'e metric $(dx^2+dy^2)/y^2$, and
symplectic area form $\omega_\hyp = dx\wedge dy/y^2$. The group $\slr$ acts
transitively and isometrically on $\hyp$ by M\"obius transformations 
$$ 
x+iy = z \mapsto
gz = \frac{az+b}{cz+d},\quad\mbox{for } g=\left(
\begin{array}{cc} a & b\\ c & d
\end{array}\right).
$$ 
Any smooth Riemann surface $\Sigma_g$ of genus $g$ greater than 1 can
be realised as the quotient of
$\hyp$ by the action of its fundamental group realised as a cocompact 
torsion-free discrete subgroup $\Gamma=\Gamma_g$ of $ \slr$.

Pick a 1-form $\A$ such that $d\A = \theta\omega_\hyp$, for some fixed
$\theta \in \R$. As in geometric quantisation we may regard $\A$ as
defining a connection $\nabla = d-i\A$ on a line bundle $\cl$ over
$\hyp$, whose curvature is $\theta\omega_\hyp$. Physically we can think of
$\A$ as the electromagnetic vector potential for a uniform magnetic field
of strength $\theta$ normal to $\hyp$. Using the Riemannian metric the
Hamiltonian of an electron in this field is given in suitable units by $$H
= H_\A = \frac 12\nabla^*\nabla = \frac 12(d-i\A)^*(d-i\A).$$ 
Comtet \cite{Comtet} has computed
the spectrum of the unperturbed Hamiltonian $H_\A$, for $\A = -\theta
dx/y$, to be the union of finitely many eigenvalues $\{(2k+1)\theta
-k(k+1):k=0,1,2\ldots < |\theta|-\frac 12\}$, and the continuous spectrum
$[\frac 14 + \theta^2, \infty)$. Any $\A$ is cohomologous to $-\theta
dx/y$ (since they both have $\omega_\hyp$ as differential) and forms
differing by an exact form $d\phi$ give equivalent models: in fact,
multiplying the wave functions by $\exp(i\phi)$ shows that the models for
$\A$ and $-\theta dx/y$ are unitarily equivalent. This equivalence also
intertwines the $\Gamma$-actions so that the spectral densities for the two
models also coincide.

This Hamiltonian can
be perturbed by adding a potential term $V$.
In \cite{CHMM}, the authors took
$V$ to be invariant under $\Gamma$, while in \cite{CHM}, the authors
allowed any smooth random potential function $V$ on $\mathbb H$
using two general notions of random potential
(in the literature random usually refers to
the $\Gamma$-action on the disorder space
being required to admit an ergodic
invariant measure). 

However, the perturbed Hamiltonian $H_{\A,V} = H_\A +V$, which
is important for the quantum Hall effect, has unknown spectrum
for general $\Gamma$-invariant $V$. For $\gamma \in \Gamma$,
let $\psi_\gamma(x) $ be a function on $\hyp$
satisfying $\gamma^*\A-\A = d \psi_\gamma$, such that
$\psi_\gamma(x_0)=0$ for all $\gamma  \in \Gamma$.

Define a projective unitary action $T^\sigma$ of $\Gamma$ on $L^2(\hyp)$ as follows:
\begin{align*}
U_\gamma(f)(x) & = f(\gamma^{-1}x)\\
S_\gamma(f)(x) & = \exp(-2\pi i \psi_\gamma(x)) f(x)\\
T_\gamma^\sigma &=U_\gamma\circ S_\gamma.
\end{align*}
Then the operators $T_\gamma^\sigma$, also known as {\em magnetic translations}, satisfy $T^\sigma_e={\rm{Id}}, \,\, T^\sigma_{\gamma_1}
T^\sigma_{\gamma_2} = \sigma(\gamma_1, \gamma_2)T^\sigma_{\gamma_1\gamma_2}$,
where 
\beq
\sigma(\gamma, \gamma') = \exp(-2\pi i\theta \psi_\gamma(\gamma') ), \label{multiplier}
\eeq
which is a {\em multiplier} (or {\em 2-cocycle})
on $\Gamma$, that is, it satisfies,
\begin{enumerate}
\item $\sigma(\gamma, e)=\sigma(e, \gamma)=1$ for all $\gamma \in \Gamma$;
\item $\sigma(\gamma_1, \gamma_2)\sigma(\gamma_1\gamma_2, \gamma_3)=\sigma(\gamma_1, \gamma_2\gamma_3)\sigma(\gamma_2, \gamma_3)$
for all $\gamma_j \in \Gamma,\, j=1,2,3.$
\end{enumerate}
These are the multipliers that we will be interested in in this paper, so that $\sigma=\sigma_\theta$ will be understood to be parametrised by $\theta$. An easy calculation shows that $T^\sigma_\gamma \nabla = \nabla T^\sigma_\gamma$ and taking adjoints, 
$T^\sigma_\gamma \nabla^* = \nabla^* T^\sigma_\gamma$. Therefore $T^\sigma_\gamma H_\A = H_\A T^\sigma_\gamma$.
Also, since $V$ is $\Gamma$-invariant, $T^\sigma_\gamma V = V T^\sigma_\gamma$. We conclude that for all $\gamma\in \Gamma$,
$T^\sigma_\gamma H_{\A,V} = H_{\A, V} T^\sigma_\gamma$, that is, the Hamiltonian commutes with magnetic translations.
The commutant of the projective action $T^\sigma$ is the projective action $T^{\bar\sigma}$. If $\lambda$ lies in a spectral gap
of $H_{\A, V} $, then the Riesz projection is $p_\lambda(H_{\A, V})$ where $p_\lambda$ is a  smooth compactly supported
function which is identically equal to 1 in the interval $[{\rm inf\, spec} H_{\A, V}, \lambda]$, 
and whose support is contained in the interval $[-\varepsilon +{\rm inf\, spec} H_{\A, V}, \lambda+ \varepsilon]$ for some $\varepsilon>0$ small enough.
Then  
\beq
p_\lambda(H_{\A, V})\in C^*_r(\Gamma, \bar\sigma)\otimes\cK(L^2(\cF)), \label{spectralprojection}
\eeq
where $\cF$ is a fundamental
domain for the action of $\Gamma$ on $\hyp$, and $p_\lambda(H_{\A, V})$ defines an element in 
$K_0( C^*_r(\Gamma, \bar\sigma))$. Here $C^*_r(\Gamma, \bar\sigma)$ is the twisted reduced $C^*$-algebra of $\Gamma$. By the {\em gap hypothesis}, the Fermi level of the physical system modeled by the Hamiltonian $H_{\A, V}$ lies in a spectral gap.

{\bf Fuchsian groups.} As in \cite{Marcolli99,Marcolli01}, we can even take $\Gamma$ to be a Fuchsian group of signature $(g,\vectnu)\equiv(g,\nu_1,\ldots,\nu_r)$, that is, $\Gamma=\Gamma_{g,\vectnu}$ is a discrete cocompact subgroup of $\slr$ of genus $g\geq 0$ and with $r$ elliptic elements of order $\nu_1,\ldots,\nu_r$ respectively. The canonical presentation is
\beq   
\begin{split}
    \Gamma_{g,\vectnu}  = &\langle A_i,B_i,C_j, i = 1,\ldots,g, j = 1,\ldots,r,\,|\\
    & [A_1,B_1]\ldots[A_g,B_g]C_1\ldots C_r = 1,\, C_j^{\nu_j} = 1 \rangle.
    \end{split}
\eeq
and the quotient orbifold $\Sigma_{g,\vectnu}=\HH/\Gamma_{g,\vectnu}$ is a compact oriented surface of genus $g$ with $r$ elliptic points $p_1,\ldots,p_r$. Note that $\HH/\Gamma_g = \Sigma_g, g\geq 2$ considered above is just the special case where $r=0$. Each $p_j$ is a conical singularity in the sense that it looks locally like a quotient $D^2/\ZZ_{\nu_j}$ under rotation by $2\pi/\nu_j$, where $D^2$ is a unit disc in $\RR^2$. The universal orbifold covering space of $\Sigma_{g,\vectnu}$ is $\HH$, and the \emph{orbifold fundamental group} of $\Sigma_{g,\vectnu}$ \cite{Scott} recovers $\Gamma_{g,\vectnu}$. Each orbifold $\Sigma_{g,\vectnu}$ is ``good'' in the sense that there is a (non-unique) smooth $\Sigma_{g'}$ which covers $\Sigma_{g,\vectnu}$, with projection map the quotient under the action of a finite group $G$ on $\Sigma_{g'}$, and the Riemann--Hurwitz formula gives
$$
g'=1+\frac{\#G}{2}(2(g-1)+(r-\nu))
$$
where $\nu\coloneqq\sum_{j=1}^r \frac{1}{\nu_j}$. There is a short exact sequence
\beq
1\longrightarrow \Gamma_{g'}\longrightarrow\Gamma_{g,\vectnu}\longrightarrow G\longrightarrow 1,\label{SES}
\eeq
so that $\Gamma_{g,\vectnu}$ always contains hyperbolic elements (even if $g<2$).

As before, we will consider multipliers $\sigma$ on $\Gamma_{g,\vectnu}$ defined through the magnetic field by Eq.\ \eqref{multiplier}, which has vanishing Dixmier--Douady invariant, $\delta(\sigma)=0$. The above discussion leading to
Eq.\ \eqref{spectralprojection} is still valid \cite{Marcolli99}, and we will therefore need the computations \cite{Farsi,Marcolli99}
\beq
K_\bullet(C^*_r(\Gamma_{g,\vectnu},\sigma))\cong \begin{cases} \ZZ^{2+\sum_{j=1}^r (\nu_j-1)}\;\qquad \bullet=0, \\ \ZZ^{2g}\,\qquad\qquad\qquad\;\; \bullet=1. \end{cases}\label{fuchsianKtheory}
\eeq 
We remark that $\Gamma_{g,\vectnu}$ is $K$-amenable \cite{Cuntz}, so that Eq.\ \eqref{fuchsianKtheory} holds also for the full (unreduced) twisted group algebra $C^*(\Gamma_{g,\vectnu},\sigma)$.

{\bf Notation:} Generally, we will use $\Gamma$ (resp.\ $\Sigma$) to denote $\Gamma_{g,\vectnu}$ (resp. $\Sigma_{g,\vectnu}$), and include subscripts only when we need to distinguish the torsion free situation $\Gamma_g, \Sigma_g$ (with $\vectnu$ empty) from the general one.

\section{Noncommutative T-duality for Riemann surfaces}
\label{sect:FM}
T-duality describes an inverse mirror relationship between a pair of type II string theories. Mathematically, it is a geometric analogue of the 
Fourier transform \cite{H}, giving rise to a bijection of (Ramond--Ramond) fields and their $K$-theoretic charges. The global aspects of topological T-duality are most interesting in the presence of a background flux \cite{BEM}, while the noncommutative generalisation appears in \cite{MR05,MR06,MR06-2}. This body of work pertains to (noncommutative) circle or torus bundles. In this section, we introduce the notion of (noncommutative) T-duality for Riemann surfaces. This is motivated by the fact that the twisted group algebra $C^*_r(\Gamma_g, \sigma)$ of the surface group $\Gamma_g$ appears as the bulk algebra when studying the IQHE on the hyperbolic plane.

As a warm up, we recall the notion of T-duality $T_{\rm circle}$ for circles. It can be defined as the composition of Poincar\'{e} duality $K^0(S^1)\cong K_1(S^1)$ with the Baum--Connes isomorphism\footnote{The Pontryagin dual $\TT$ of $\ZZ$ is also a circle, but we use a different symbol from $S^1=B\ZZ$ for clarity.} $\mu^\ZZ:K_1(S^1)\equiv K_1(B\ZZ)\cong K_1(C^*(\ZZ))\cong K^{-1}(\TT)$, and its formula on generators is
\begin{align}
K^0(S^1) \ni [{\bf 1}_{S^1}] &\overset{T_{\rm circle}} {\longleftrightarrow}[\zeta]\in K^{-1}(\TT)\nonumber \\
K^{-1}(S^1)\ni [W] & \overset{T_{\rm circle}} {\longleftrightarrow}[{\bf 1}_\TT]\in K^0(\TT).\label{circleTduality}
\end{align}
Here, ${\bf 1}_{S^1}, {\bf 1}_\TT$ are trivial line bundles generating $K^0(S^1), K^0(\TT)$ respectively, while $W$ and $\zeta$ are winding number 1 unitaries in $C(S^1)$ and $C(\TT)$ generating $K^{-1}(S^1)$ and $K^{-1}(\TT)$ respectively. This deceptively simple formula hides the non-triviality of the Baum--Connes map --- in particular, that it is an isomorphism (see, e.g.\ the detailed discussion in Section 4 of \cite{Valette}).

There is also a formulation as a Fourier--Mukai transform \cite{BEM} which makes the analogy to the Fourier transform more apparent, and proceeds via a ``push-pull'' construction. In the following, we generalise the latter construction for $\Sigma$. This proceeds quite abstractly, and the reader may skip to Section \ref{section:NCdualityeven} for the description through the Baum--Connes map.

For $\Sigma=\Sigma_{g,\vectnu}$ an orbifold, the appropriate algebra of functions is $C(\Sigma_{g'})\rtimes G$ with the $G$-action in accordance with Eq.\ \eqref{SES}. For the rest of this subsection, we will abuse notation and simply write $C(\Sigma)$ to mean $C(\Sigma_{g'})\rtimes G$, keeping in mind that we are treating $\Sigma$ as an orbifold rather than just an ordinary topological space, thus it has orbifold $K$-theories etc. \cite{ALR}. We note that $G$ is finite so that there are Green--Julg--Rosenberg isomorphisms $K_\bullet(C(\Sigma))\equiv K^\bullet_{\rm orb}(\Sigma)\equiv K^\bullet_G(\Sigma_{g'})\cong K_\bullet(C(\Sigma_{g'})\rtimes G)$ and $K^\bullet(C(\Sigma))\equiv K_\bullet^{\rm orb}(\Sigma)\equiv K_\bullet^G(\Sigma_{g'})\cong K^\bullet(C(\Sigma_{g'})\rtimes G)$, cf.\ Theorem 20.2.7 of \cite{Blackadar}.

Consider the diagram,
\begin{equation}
\xymatrix{ 
& {\cM_\sigma} \ar[d] & \\
 &  C(\Sigma)\otimes C^*_r(\Gamma, \sigma)  &  \\
 C(\Sigma) \ar[ur]_{\iota_1} && C^*_r(\Gamma, \sigma)   \ar[ul]^{\iota_2}, 
 }\nonumber
\end{equation}
where 
$\iota_1, \iota_2$ are inclusion maps, and
$\cM_\sigma$ is the universal finite projective module over $C(\Sigma)\otimes C^*_r(\Gamma, \sigma)$, playing the role of the Poincar\'e line bundle, which we now construct.

Consider  $\Gamma_g$ as a discrete subgroup of $\slr$, acting on the right on $\hyp$.
The multiplier $\sigma=\sigma_\theta$ extends to $\slr$ \cite{CHMM} as follows. Let $c$ denote the area cocycle on $\slr$ as defined just after Proposition \ref{BC}, then $\sigma=\sigma_\theta = \exp(2\pi \sqrt{-1} \theta c)$. So there is a central extension,
$$
1\longrightarrow \mathrm{U}(1)\longrightarrow \slr^\sigma \longrightarrow\slr\longrightarrow 1.
$$
The cocycle $\sigma$ has the property that it vanishes on ${\rm SO}(2)$, the stabiliser of $\slr$ on $\HH$, so we may define $\hyp^\sigma={\rm SO(2)}\backslash\slr^\sigma$. There are also the central extensions restricted to $\Gamma=\Gamma_{g,\vectnu}$ and to its torsion-free normal subgroup $\Gamma_{g'}$
\begin{align*}
&1\longrightarrow \mathrm{U}(1)\longrightarrow \Gamma_{g'}^\sigma \longrightarrow\Gamma_{g'}\longrightarrow 1,\\
&1\longrightarrow \mathrm{U}(1)\longrightarrow \Gamma_{g,\vectnu}^\sigma \longrightarrow\Gamma_{g,\vectnu}\longrightarrow 1
\end{align*}
Now $\Gamma_{g'}^\sigma$ acts on the right on ${\rm SO(2)}\backslash\slr^\sigma = \hyp^\sigma$
and by the left action on $C^*_r(\Gamma, \sigma)$. Set $\cM_\sigma = C(\Sigma_{g'}, \cV_\sigma)$, where 
\beq
\cV_\sigma =  \hyp^\sigma\times_{\Gamma_{g'}^\sigma} C^*_r(\Gamma, \sigma) \nonumber
\eeq 
is a locally trivial vector bundle with fibers $C^*_r(\Gamma, \sigma)$ over
$\hyp^\sigma/\Gamma_{g'}^\sigma = \Sigma_{g'}$. There remains a left action of the quotient $G= \Gamma_{g,\vectnu}/\Gamma_{g'}\cong\Gamma_{g,\vectnu}^\sigma/\Gamma_{g'}^\sigma$ on $\cV_\sigma$ covering that on $\Sigma_{g'}$.

Recall that $\Sigma_g=\HH/\Gamma_g=E\Gamma_g/\Gamma_g$ is a classifying space $B\Gamma_g$. It is known that $C^*_r(\Gamma_g, \sigma)$ is $K$-oriented with Poincar\'{e} duality $K_0(C^*_r(\Gamma_g, \sigma)) \cong K^0(C^*_r(\Gamma_g, \bar{\sigma}) )$
implemented by a fundamental class constructed in Section 2.6 of \cite{BMRS} (see also Theorem 3.3 of \cite{EEK}), and that $\Sigma_g$ is $K$-oriented. Their $K$-theories are related by twisted versions of Kasparov's maps (pp.\ 192 of \cite{Kas88}),
$$ \beta_\sigma:K_\bullet(\Sigma_g)\rightarrow K_\bullet(C^*(\Gamma_g, \sigma)),$$
$$\alpha_{\bar{\sigma}}: K^\bullet(C^*(\Gamma_g,\bar{\sigma}))\rightarrow K^\bullet(\Sigma_g),$$
which turn out to be dual to each other and isomorphisms in this case, as a result of K-orientability. For $\Sigma=\Sigma_{g,\vectnu}$, there is Poincar\'{e} duality in the orbifold sense.
More precisely, by section 4, \cite{Kas88}, one has $G$-equivariant Poincar\'e duality, which implies that 
$$
PD : K^G_\bullet(\Sigma_{g'}) \cong K_G^\bullet(\Sigma_{g'}),
$$
that is,
$$
PD_{orb} : K^{\rm orb}_\bullet(\Sigma) \cong K_{\rm orb}^\bullet(\Sigma).
$$
We say that $\Sigma$ is $K_{\rm orb}$-oriented in this case.
 We use the $G$-equivariant version of Kasparov's maps $\alpha_{\bar{\sigma}}, \beta_\sigma$ to deduce $K$-orientability for $C^*(\G,\sigma)$ (and $C^*_r(\G,\sigma)$ by $K$-amenability). Thus for $\Sigma=\Sigma_{g,\vectnu}$ and $\Gamma=\G$, we have that $C(\Sigma)$ and $C^*_r(\Gamma,\sigma)$ are $K$-orientable, and together with the fact that their $K$-theories are torsion-free, we deduce by the K\"unneth theorem that $C(\Sigma)\otimes C^*_r(\Gamma, \sigma)  $ is $K$-orientable as well. Thus we have the Poincar\'{e} dualities
\beq
PD_{C(\Sigma)\otimes C^*_r(\Gamma, \sigma) } : K_0(C(\Sigma)\otimes C^*_r(\Gamma, \sigma) ) \cong K^0(C(\Sigma)\otimes  C^*_r(\Gamma, \bar{\sigma}) ), \nonumber
\eeq
\beq
PD_{C^*_r(\Gamma, \sigma) } : K^0(C^*_r(\Gamma, \bar{\sigma}) ) \cong K_0(C^*_r(\Gamma, \sigma) ). \nonumber
\eeq

Finally, for an orbifold vector bundle $\cE$ over $\Sigma=\Sigma_{g,\vectnu}$ (or $G$-equivariant bundle over $\Sigma_{g'}$) representing a class in $K_0(C(\Sigma))$, noncommutative T-duality is the composition,
\beq
K^0(C(\Sigma))\ni [\cE] \longrightarrow 
\iota_2^!( (\iota_1)_*([\cE]) \otimes_{C(\Sigma)}  \cM_\sigma) \in K_0(C^*_r(\Gamma, \sigma) ), \nonumber
\eeq
where the wrong way map, or Gysin map $$\iota_2^!\colon K_0(C(\Sigma)\otimes C^*_r(\Gamma, \sigma) ) \to K_0(C^*_r(\Gamma, \sigma) ) $$
is defined by 
$$\iota_2^! = PD_{C^*_r(\Gamma, \sigma) }\circ (\iota_2)^*\circ PD_{C(\Sigma)\otimes C^*_r(\Gamma, \sigma) },$$ where 
$$ (\iota_2)^*\colon K^0(C(\Sigma)\otimes C^*_r(\Gamma, \bar{\sigma}) ) \to K^0(C^*_r(\Gamma, \bar{\sigma}) ) $$ is the homomorphism in $K$-homology.

\subsection{Noncommutative T-duality for $\Sigma_g$ in even degree}\label{section:NCdualityeven}
We can also formulate noncommutative T-duality through the Baum--Connes isomorphisms. First, consider the case where $\vectnu$ is empty, and the $K$-theory degree is even.
\subsubsection*{Poincar\'e duality in $K$-theory}
The 2D Riemann surface $\Sigma_g$ is a Spin manifold, therefore it is $K$-oriented. Poincar\'e duality in the $K$-theory of $\Sigma_g$ is given by 
\begin{align}
PD_{\Sigma_g}: K^0(\Sigma_g) &\stackrel{\sim}{\longrightarrow} K_0(\Sigma_g)\nonumber\\
[\mathcal{E}] & \mapsto [\dirac_{\Sigma_g}\otimes \mathcal{E}] \label{PDformula}
\end{align}
where  $\dirac_{\Sigma_g} \otimes \mathcal{E}$ is the Spin Dirac operator on $\Sigma_g$ coupled to the vector bundle $\mathcal{E}$ over $\Sigma_g$. In particular, we see that the class of the trivial line bundle $[{\bf 1}_{\Sigma_g}]$ maps to $ [\dirac_{\Sigma_g}]$, the class of the 
Spin Dirac operator on $\Sigma_g$, also known as the fundamental class in $K$-homology, which is a generator. Also the class of 
the nontrivial line bundle $\cL$ with Chern number $c_1(\cL)=\int_{\Sigma_g} \frac{dx \wedge dy}{y^2}=1$ maps under $PD_{\Sigma_g}$ to the class of the coupled Spin Dirac operator,  $ [\dirac_{\Sigma_g} \otimes \cL]$. 
Recall that 
\beq
K^0(\Sigma_g) \cong \ZZ[{\bf 1}_{\Sigma_g}] \oplus \ZZ[\cL],\nonumber
\eeq 
and since Poincar\'e duality is an isomorphism, we see that
\beq
K_0(\Sigma_g) \cong \ZZ[\dirac_{\Sigma_g}] \oplus \ZZ[\dirac_{\Sigma_g}\otimes \cL].\nonumber
\eeq

\subsubsection{Twisted Baum-Connes isomorphism}
Let $[\wt{\cL}]$ denote the class of the virtual bundle $\cL\ominus{\bf 1}_{\Sigma_g}$, then we can take $[{\bf 1}_{\Sigma_g}], [\wt{\cL}]$ as generators for $K^0(\Sigma_g)$.

Recall that $\Sigma_g\simeq B\Gamma_g$, so the twisted Baum--Connes map \cite{CHMM,Mathai99} is an isomorphism of groups, 
\beq
\mu_\theta \colon K_0(\Sigma_g) \stackrel{\sim}{\longrightarrow}  K_0(C^*_r(\Gamma_g, \sigma)).\nonumber
\eeq
because the Baum--Connes conjecture with coefficients is true for $\Gamma_g$ (cf. \cite{BCH,Connes94}). It can be expressed as 
\beq
\mu_\theta:K_0(\Sigma_g) \ni[\dirac_{\Sigma_g} \otimes \mathcal{E}] \mapsto {\rm index}_{C^*_r(\Gamma_g, \sigma)}(\dirac_{\Sigma_g} \otimes \mathcal{E} \otimes \cV_\sigma)
\in K_0(C^*_r(\Gamma_g, \sigma)).\nonumber
\eeq

Let ${\bf 1}$ denote the trivial projection in $C^*_r(\Gamma_g, \sigma)$.
\begin{proposition}\label{BC}
$\mu_\theta$ exchanges $[\dirac_{\Sigma_g}\otimes \wt{\cL}]  \leftrightarrow [\bf 1]$ and $ [\dirac_{\Sigma_g}]  \leftrightarrow [\cP_\sigma]$, where $[\cP_\sigma]$ denotes the nontrivial class in $K_0(C^*_r(\Gamma_g, \sigma))$ defined below.
\end{proposition}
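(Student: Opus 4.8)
The plan is to compute the image under the twisted Baum--Connes isomorphism $\mu_\theta$ of each of the two $K$-homology generators $[\dirac_{\Sigma_g}]$ and $[\dirac_{\Sigma_g}\otimes\wt{\cL}]$ by means of the index formula $\mu_\theta([\dirac_{\Sigma_g}\otimes\cE]) = \mathrm{index}_{C^*_r(\Gamma_g,\sigma)}(\dirac_{\Sigma_g}\otimes\cE\otimes\cV_\sigma)$, and to identify the resulting $C^*_r(\Gamma_g,\sigma)$-index classes with $[\mathbf 1]$ and a distinguished nontrivial class $[\cP_\sigma]$. Since $\mu_\theta$ is a group isomorphism $\ZZ^2\to K_0(C^*_r(\Gamma_g,\sigma))\cong\ZZ^2$ and $\{[\mathbf 1_{\Sigma_g}],[\wt\cL]\}$ (equivalently $\{[\dirac_{\Sigma_g}],[\dirac_{\Sigma_g}\otimes\cL]\}$ via $PD_{\Sigma_g}$) is a basis of the source, the heart of the matter is to pin down the two index classes; then the correspondence is forced.

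First I would treat $[\dirac_{\Sigma_g}\otimes\wt\cL]\leftrightarrow[\mathbf 1]$. By linearity of $\mu_\theta$ this amounts to showing $\mu_\theta([\dirac_{\Sigma_g}\otimes\cL]) - \mu_\theta([\dirac_{\Sigma_g}]) = [\mathbf 1]$ in $K_0(C^*_r(\Gamma_g,\sigma))$. The cleanest route is a twisted index computation: the $C^*_r(\Gamma_g,\sigma)$-valued index of $\dirac_{\Sigma_g}\otimes\cE\otimes\cV_\sigma$ is detected, at the level of the canonical trace $\tau$, by the twisted $L^2$-index theorem (the hyperbolic analogue of Atiyah's $L^2$-index theorem, as used in \cite{CHMM}), giving $\tau_*(\mu_\theta([\dirac_{\Sigma_g}\otimes\cE])) = \int_{\Sigma_g}\widehat{A}(\Sigma_g)\,\mathrm{ch}(\cE)\,e^{\theta\omega_\hyp/2\pi}$, which in this $2$-dimensional Spin case collapses to $\mathrm{rank}(\cE)\cdot\frac{\theta\,\mathrm{Area}}{2\pi} + \deg\cE + \tfrac12\mathrm{rank}(\cE)\,\chi(\Sigma_g)\cdot(\text{curvature term})$; the point is that the difference class for $\cE=\wt\cL=\cL\ominus\mathbf 1_{\Sigma_g}$ has rank $0$ and $c_1=c_1(\cL)=1$ (normalised so that $\int_{\Sigma_g}\omega_\hyp=1$), so its trace equals $1$, matching $\tau_*([\mathbf 1])=1$. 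Because $K_0(C^*_r(\Gamma_g,\sigma))\cong\ZZ^2$ is not separated by $\tau$ alone, I would supplement this with a second pairing --- either the canonical trace on the \emph{untwisted} part, or more precisely the argument that the rank-zero virtual bundle $\wt\cL$ maps to a genuine (rank-one) \emph{trivial} projection because $\cL$ has Chern number exactly $1$, which is the defining geometric input for $[\mathbf 1]$ as the image of the Powers--Rieffel-type projection in the hyperbolic setting (this is exactly the mechanism in \cite{CHMM,Marcolli01}). Then I would \emph{define} $[\cP_\sigma] := \mu_\theta([\dirac_{\Sigma_g}])$, check it is nontrivial (its trace is $\theta/2\pi$ times area plus the Euler-characteristic contribution, which is nonzero and in particular not an integer for generic $\theta$, hence $[\cP_\sigma]\neq n[\mathbf 1]$), and observe that together with $[\mathbf 1]$ it generates $K_0$ since $\mu_\theta$ is an isomorphism carrying a basis to this pair.

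Concretely, the steps in order: (1) recall $PD_{\Sigma_g}$ sends the basis $\{[\mathbf 1_{\Sigma_g}],[\wt\cL]\}$ of $K^0(\Sigma_g)$ to the basis $\{[\dirac_{\Sigma_g}],[\dirac_{\Sigma_g}\otimes\cL]-[\dirac_{\Sigma_g}]\}$ of $K_0(\Sigma_g)$, so it suffices to evaluate $\mu_\theta$ on $K$-homology; (2) apply the index formula defining $\mu_\theta$ and the twisted $L^2$-index theorem to compute traces; (3) use the rank/degree bookkeeping of $\wt\cL$ to conclude $\mu_\theta([\dirac_{\Sigma_g}\otimes\wt\cL])=[\mathbf 1]$, invoking that a rank-zero, degree-one virtual bundle T-dualises to a rank-one trivial projection (the hyperbolic Powers--Rieffel projection); (4) set $[\cP_\sigma]:=\mu_\theta([\dirac_{\Sigma_g}])$ and record its nontriviality; (5) conclude by the isomorphism property of $\mu_\theta$ that $\{[\mathbf 1],[\cP_\sigma]\}$ is the induced basis, establishing both correspondences. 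The main obstacle I anticipate is step (3): $K_0(C^*_r(\Gamma_g,\sigma))$ has rank $2$ but the canonical trace detects only one $\ZZ$ worth of information, so showing the image is \emph{exactly} $[\mathbf 1]$ and not $[\mathbf 1]+k[\cP_\sigma]$ requires either a second invariant (e.g. a dual trace or cyclic cocycle distinguishing the two generators) or a direct geometric identification of the T-dual bundle with a rank-one projection; I would lean on the explicit projection constructions of \cite{CHMM,Marcolli01} and the fact that $\beta_\sigma$ (Kasparov's map) is compatible with the Baum--Connes assembly, so that the unit of $C^*_r(\Gamma_g,\sigma)$ is hit precisely by the normalised generator of the "area" part of $K^0(\Sigma_g)$.
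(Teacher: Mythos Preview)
Your overall strategy matches the paper's: compute $\tau\circ\mu_\theta$ via a twisted index theorem, observe that $\tau$ alone does not separate the two generators of $K_0(C^*_r(\Gamma_g,\sigma))\cong\ZZ^2$, and then appeal to a second invariant. Where your proposal remains incomplete is precisely the point you flag as the ``main obstacle'': you gesture at ``a dual trace or cyclic cocycle'' or a ``direct geometric identification'' but do not name the invariant or say how to evaluate it. The paper resolves this cleanly by using the cyclic $2$-cocycle $\tau_c$ associated to the hyperbolic \emph{area cocycle} $c$ on $\Gamma_g$, together with the higher twisted index theorem of \cite{Marcolli01}, which gives
\[
\tau_c\bigl(\mu_\theta([\dirac_{\Sigma_g}\otimes\cE])\bigr)=\int_{\Sigma_g}\omega_c\wedge e^\B\wedge\mathrm{Ch}(\cE)=2(g-1)\,\mathrm{rank}(\cE).
\]
Since $\tau_c([\mathbf 1])=\tau_c(\mathbf 1,\mathbf 1,\mathbf 1)=0$ while $\tau_c(\mu_\theta([\dirac_{\Sigma_g}]))=2(g-1)\neq 0$, the pair $(\tau,\tau_c)$ is injective on $K_0$ and pins down both images uniquely. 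Your alternative suggestions (``canonical trace on the untwisted part'', or that Chern number $1$ forces the image to be a rank-one trivial projection) are not arguments as stated; there is no explicit Powers--Rieffel projection to point to in this setting, so the second pairing really is the content.

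A smaller point: the paper defines $[\cP_\sigma]$ \emph{intrinsically} as the generator with $\tau_c([\cP_\sigma])=2(g-1)$ and $\tau([\cP_\sigma])=\theta$, and then \emph{proves} $\mu_\theta([\dirac_{\Sigma_g}])=[\cP_\sigma]$. Your route of setting $[\cP_\sigma]:=\mu_\theta([\dirac_{\Sigma_g}])$ renders half the proposition tautological and loses the characterisation of $[\cP_\sigma]$ by its conductance value, which is what is actually used later (Remark~\ref{rem:cohomologous}, Eq.~\eqref{boundarymap1}). Also, your index formula carries spurious $\widehat A$/Euler terms; in the paper's normalisation it is simply $\tau(\mu_\theta([\dirac_{\Sigma_g}\otimes\cE]))=\mathrm{rank}(\cE)\,\theta+c_1(\cE)$, so the range of $\tau$ is $\ZZ+\theta\ZZ$ --- note in passing that for irrational $\theta$ this already separates $K_0$, but the $\tau_c$ argument is needed to cover all $\theta$ uniformly.
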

\begin{proof}
Let $\tau$ denote the von Neumann trace on $C^*_r(\Gamma_g, \sigma)$ extended to an additive map {$K_0(C^*_r(\Gamma_g, \sigma))\rightarrow \RR$}, and $\B = \theta\, (dx\wedge dy)/y^2$ a 2-form on $\Sigma_g$. By the index theorem in \cite{CHMM,Mathai99},
\beq
\tau(\mu_\theta([\dirac_{\Sigma_g}\otimes \mathcal{E}])) = \int_{\Sigma_g} e^\B \wedge {\rm Ch}(\mathcal{E}) = {\rm rank}(\mathcal{E})  \int_{\Sigma_g} \B + c(\mathcal{E}) =  {\rm rank}(\mathcal{E})\,  \theta + c(\mathcal{E}). \nonumber
\eeq
Therefore 
\beq
\tau(\mu_\theta([\dirac_{\Sigma_g}\otimes \wt{\cL}]) =1,\qquad
\tau(\mu_\theta([\dirac_{\Sigma_g}\otimes {\bf 1}_{\Sigma_g}])) =\theta\nonumber
\eeq
and the range of $\tau$ is $\ZZ+\theta\ZZ$ with $\tau([{\bf 1}])=1$.
Recall that the {\em area cocycle} $c(g_1, g_2)$ on $\slr$ is a group 2-cocycle defined as the oriented hyperbolic area of the geodesic triangle with vertices at $\{o, g_1.o, g_2^{-1}.o\}$ on the hyperbolic plane $\hyp$ with $o \in \hyp$, as in Fig.\ \ref{fig:hyp-triangle}, and let $\tau_c$ be the corresponding cyclic cocycle (see Section \ref{sec:cyclicHall}). The higher twisted index theorem (section 2, \cite{Marcolli01}) gives
\beq
\tau_c(\mu_\theta([\dirac_{\Sigma_g}\otimes \mathcal{E}]))=\int_{\Sigma_g}\omega_c\wedge e^\B \wedge {\rm Ch}(\mathcal{E}) = {\rm rank}(\mathcal{E}) 2(g-1),\nonumber
\eeq
where $\omega_c$ is the hyperbolic volume form associated to the area cocycle $c$ on $\Gamma_g$. Then the range of $\tau_c$ on $K_0(C^*_r(\Gamma_g,\sigma))$ is $2(g-1)\ZZ$. Since $\tau_c([{\bf 1}])=\tau_c({\bf 1, 1, 1})=0$, there is another generator $[\cP_\sigma]$ of $K_0(C^*_r(\Gamma_g,\sigma))$ which maps to $2(g-1)$ under $\tau_c$. This other generator is only specified up to some multiple of $[{\bf 1}]$, and we choose it\footnote{Note the slight abuse of notation, since $[\cP_\sigma]$ may actually need be written as a difference of projections.} such that $\tau([\cP_\sigma])=\theta$. The two index formulae allow us to conclude that
\beq
\mu_\theta([\dirac_{\Sigma_g}\otimes \wt{\cL}]) = [{\bf 1}], \qquad \mu_\theta([\dirac_{\Sigma_g}\otimes{\bf 1}_{\Sigma_g}]) =[\cP_\sigma].\nonumber
\eeq

\end{proof}
\begin{figure}[h]
    \centering
    \includegraphics[width=0.5\textwidth]{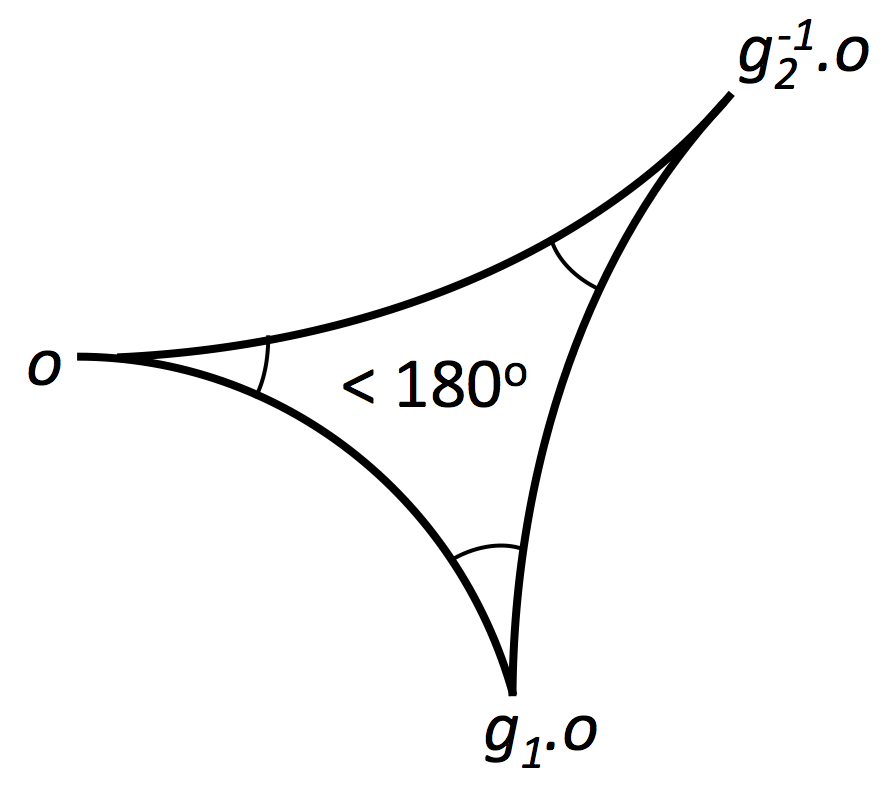}
    \caption{Hyperbolic triangle}
    \label{fig:hyp-triangle}
\end{figure}

Noncommutative T-duality at the level of $K$-theory groups is the composition, 
\beq
T_{\Sigma_g}=\mu_\theta \circ PD_{\Sigma_g} \colon K^0(\Sigma_g) \stackrel{\sim}{\longrightarrow} K_0(C^*_r(\Gamma_g, \sigma)).\nonumber
\eeq
By Eq.\ \eqref{PDformula} and Proposition \ref{BC}, we have

\begin{corollary} \label{cor:evengenformula1}
$T_{\Sigma_g}$ exchanges  $[{\bf 1}_{\Sigma_g}]  \leftrightarrow [\cP_\sigma]$ and $[\wt{\cL}]  \leftrightarrow [\bf 1]$.
\end{corollary}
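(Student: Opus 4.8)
The plan is to chase the two generators $[{\bf 1}_{\Sigma_g}]$ and $[\wt{\cL}]$ of $K^0(\Sigma_g)$ through the composition $T_{\Sigma_g} = \mu_\theta \circ PD_{\Sigma_g}$, using the explicit Poincar\'e duality formula \eqref{PDformula} and the exchange relations already established in Proposition \ref{BC}. There is essentially no new content beyond assembling these two ingredients, so the argument is short.

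First I would evaluate $PD_{\Sigma_g}$ on the chosen generators. By \eqref{PDformula}, $PD_{\Sigma_g}([{\bf 1}_{\Sigma_g}]) = [\dirac_{\Sigma_g}]$, the $K$-homology fundamental class of $\Sigma_g$. For the second generator, I would invoke additivity of $PD_{\Sigma_g}$ (it is a group homomorphism, indeed the Poincar\'e duality isomorphism) together with $[\wt{\cL}] = [\cL] - [{\bf 1}_{\Sigma_g}]$, obtaining $PD_{\Sigma_g}([\wt{\cL}]) = [\dirac_{\Sigma_g}\otimes\cL] - [\dirac_{\Sigma_g}]$, which is exactly the class written $[\dirac_{\Sigma_g}\otimes\wt{\cL}]$ in Proposition \ref{BC}.

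Next I would apply the twisted Baum--Connes isomorphism $\mu_\theta$ and quote Proposition \ref{BC} verbatim: $\mu_\theta([\dirac_{\Sigma_g}]) = [\cP_\sigma]$ and $\mu_\theta([\dirac_{\Sigma_g}\otimes\wt{\cL}]) = [{\bf 1}]$. Composing the two steps gives $T_{\Sigma_g}([{\bf 1}_{\Sigma_g}]) = [\cP_\sigma]$ and $T_{\Sigma_g}([\wt{\cL}]) = [{\bf 1}]$. Since both $PD_{\Sigma_g}$ and $\mu_\theta$ are isomorphisms of finitely generated free abelian groups, $T_{\Sigma_g}$ is too, and the inverse sends $[\cP_\sigma]\mapsto[{\bf 1}_{\Sigma_g}]$ and $[{\bf 1}]\mapsto[\wt{\cL}]$; this justifies the word ``exchanges'' in the statement of Corollary \ref{cor:evengenformula1}.

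The only point requiring a line of care is the bookkeeping with virtual bundles: one must make sure that the symbol $[\dirac_{\Sigma_g}\otimes\wt{\cL}]$ appearing in Proposition \ref{BC} is consistently read as $[\dirac_{\Sigma_g}\otimes\cL]-[\dirac_{\Sigma_g}]$, which is precisely what additivity of $PD_{\Sigma_g}$ produces, as already flagged in the footnote to Proposition \ref{BC}. With this convention fixed, the corollary is an immediate formal consequence of \eqref{PDformula} and Proposition \ref{BC}, and there is no genuine obstacle to overcome.
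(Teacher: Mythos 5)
Your proposal is correct and matches the paper's own derivation, which simply notes that the corollary follows from Eq.~\eqref{PDformula} and Proposition~\ref{BC}. Your extra remark on reading $[\dirac_{\Sigma_g}\otimes\wt{\cL}]$ as $[\dirac_{\Sigma_g}\otimes\cL]-[\dirac_{\Sigma_g}]$ via additivity of $PD_{\Sigma_g}$ is a sensible clarification but does not change the argument.
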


\begin{remark}\label{rem:cohomologous}
From the physical perspective, $\tau$ is relevant for \emph{gap-labelling} problems, while the geometrically defined 2-cocycle $\tau_c$ turns out to be cohomologous to the (hyperbolic) \emph{Kubo conductivity} cocycle $\tau_K$ \cite{CHMM} which computes the contribution to the Hall conductance by a projection $\cP$. Thus $[\cP_\sigma]$ has the physical meaning of the $K$-theory class contributing to the smallest nonzero value of the quantised Hall conductance.
\end{remark}

\begin{remark}
In the Euclidean case where $\Gamma=\ZZ^2$ and $C^*_r(\ZZ^2,\sigma)$ is the noncommutative torus, $\cP_\sigma$ is the Rieffel projection, cf.\ \cite{MT2}.
\end{remark}


\subsection{Noncommutative T-duality 
for $\Sigma_{g,\nu}$ 
in even degree}\label{sect:fuchsianTduality}

When $\vectnu$ is nonempty, we need to consider the \emph{orbifold} $K$-theory $K^0_{\rm orb}(\Sigma_{g,\vectnu})\cong K^0_G(\Sigma_{g'})$. Note that for the smooth manifold $\Sigma_g$, we have $K^0_{\rm orb}(\Sigma_g)=K^0(\Sigma_g)$. Returning to $\Sigma_{g,\vectnu}$, besides the trivial line bundle ${\bf 1}_{\Sigma_{g,\vectnu}}$ and the (virtual) line bundle $\wt{\cL}$ whose Chern class generates the top degree cohomology, there are new \emph{orbifold line bundles} on $\Sigma_{g,\vectnu}$ (or $G$-equivariant bundles on $\Sigma_{g'}$) which generate extra copes of $\ZZ$ in $K^0_{\rm orb}(\Sigma_{g,\vectnu})$. These extra bundles can be labelled by the non-trivial characters $\chi_j$ of $\ZZ_{\nu_j}$ at each singular point $p_j$ \cite{Marcolli99}. We write $\wt{\cL}_{\chi_j}$ for these virtual bundles, and there are $\sum_{j=1}^r(\nu_j-1)$ classes of them. Then $[{\bf 1}_{\Sigma_{g,\vectnu}]}, [\wt{\cL}]$ and $[\wt{\cL}_{\chi_j}], j=1,\ldots,r$ account for $$K^0_{\rm orb}(\Sigma_{g,\vectnu})\cong \ZZ^{2+\sum_{j=1}^r(\nu_j-1)}.$$


Poincar\'{e} duality $PD_{\Sigma_{g,\vectnu}}$ takes $[\cE]\in K^0_{\rm orb}(\Sigma_{g,\vectnu})$ to the class of the $\cE$-twisted Dirac operator (denoted $\dirac_\cE^+$ in \cite{Marcolli01}) in $K^{\rm orb}_0(\Sigma_{g,\vectnu})\cong K_0^{\Gamma_{g,\vectnu}}(\underline{E}\Gamma_{g,\vectnu})$ with the latter isomorphism given by lifting $\dirac_\cE^+$ to a $\Gamma_{g,\vectnu}$-invariant operator $\wt{\dirac_\cE^+}$ on the contractible cover $\HH\simeq \underline{E}\Gamma_{g,\vectnu}$. Recall also that the twisted Baum--Connes assembly map $\mu_\theta$ gives an isomorphism
\beq
\mu_\theta:K_0^{\Gamma_{g,\vectnu}}(\underline{E}\Gamma_{g,\vectnu})\rightarrow K_0(C^*_r(\Gamma_{g,\vectnu},\sigma)),\nonumber
\eeq
which is in accordance with Eq.\ \eqref{fuchsianKtheory}. Noncommutative T-duality in this case is the map $T_{\Sigma_{g,\vectnu}}=\mu_\theta\circ PD_{\Sigma_{g,\vectnu}}$,
where
\beq
T_{\Sigma_{g,\vectnu}}:K^0_{\rm orb}(\Sigma_{g,\vectnu}) \ni [\mathcal{E}] \mapsto {\rm index}_{C^*_r(\Gamma_{g,\vectnu}, \sigma)}(\dirac_\cE^+ \otimes \cV_\sigma)
\in K_0(C^*_r(\Gamma_{g,\vectnu}, \sigma)).\nonumber
\eeq
There is again a higher index formula \cite{Marcolli99,Marcolli01}
\beq
\tau_c(T_{\Sigma_{g,\vectnu}}([\cE]))=\phi \,{\rm rank}(\cE),\label{higherindexformula}
\eeq
where $\phi=2(g-1)+(r-\nu)\in\QQ$ is the orbifold Euler characteristic of $\Sigma_{g,\vectnu}$. Let $[\cP_\sigma]\in K_0(C^*_r(\Gamma_{g,\vectnu},\sigma))$ be a generator such that $\tau_c(\cP_\sigma)=\phi$. Then we have
\begin{corollary}\label{cor:evengenformula2}
$T_{\Sigma_{g,\vectnu}}$ takes $[\wt{\cL}],[\wt{\cL}_{\chi_j}]$ into ${\rm ker}\, \tau_c$, and $[{\bf 1}_{\Sigma_{g,\vectnu}}]  \mapsto [\cP_\sigma]$ up to an element in ${\rm ker}\, \tau_c$.
\end{corollary}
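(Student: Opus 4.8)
The plan is to read off the claim directly from the higher index formula \eqref{higherindexformula}, $\tau_c(T_{\Sigma_{g,\vectnu}}([\cE]))=\phi\,{\rm rank}(\cE)$, exactly as Corollary \ref{cor:evengenformula1} was obtained from Proposition \ref{BC} in the torsion-free case. The only ingredients needed are this formula, the additivity of $\tau_c\circ T_{\Sigma_{g,\vectnu}}$ on $K^0_{\rm orb}(\Sigma_{g,\vectnu})$, and the ranks of the three families of generators $[{\bf 1}_{\Sigma_{g,\vectnu}}], [\wt{\cL}], [\wt{\cL}_{\chi_j}]$.

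First I would note that $\wt{\cL}$ and each $\wt{\cL}_{\chi_j}$ are by construction \emph{virtual} bundles of rank zero: $\wt{\cL}=\cL\ominus{\bf 1}_{\Sigma_{g,\vectnu}}$, and $\wt{\cL}_{\chi_j}$ is a difference of the orbifold line bundle attached to the nontrivial character $\chi_j$ of $\ZZ_{\nu_j}$ at $p_j$ and the trivial line bundle. By additivity, the index formula --- a priori stated for honest bundles --- applies to these formal differences, so $\tau_c(T_{\Sigma_{g,\vectnu}}([\wt{\cL}]))=\phi\cdot 0=0$ and likewise $\tau_c(T_{\Sigma_{g,\vectnu}}([\wt{\cL}_{\chi_j}]))=0$; hence these classes lie in ${\rm ker}\,\tau_c$. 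On the other hand ${\bf 1}_{\Sigma_{g,\vectnu}}$ has rank one, so \eqref{higherindexformula} gives $\tau_c(T_{\Sigma_{g,\vectnu}}([{\bf 1}_{\Sigma_{g,\vectnu}}]))=\phi=\tau_c([\cP_\sigma])$ by the defining property of $[\cP_\sigma]$, and linearity of $\tau_c$ then forces $T_{\Sigma_{g,\vectnu}}([{\bf 1}_{\Sigma_{g,\vectnu}}])-[\cP_\sigma]\in{\rm ker}\,\tau_c$, which is precisely the stated assertion.

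The one point deserving care --- and the closest thing to an obstacle --- is checking that the higher twisted index theorem of \cite{Marcolli99,Marcolli01} really produces ${\rm rank}(\cE)$, and not some orbifold-corrected degree, as the coefficient of $\phi$ in \eqref{higherindexformula}, and that it is natural under $K$-theoretic sums so that it may legitimately be evaluated on the non-effective classes $\wt{\cL},\wt{\cL}_{\chi_j}$; both are built into the cited theorem but should be made explicit here. For completeness one might also record that $T_{\Sigma_{g,\vectnu}}=\mu_\theta\circ PD_{\Sigma_{g,\vectnu}}$ is an isomorphism, so that the images of $[{\bf 1}_{\Sigma_{g,\vectnu}}],[\wt{\cL}],[\wt{\cL}_{\chi_j}]$ generate $K_0(C^*_r(\G,\sigma))$ consistently with \eqref{fuchsianKtheory} and $[\cP_\sigma]$ is then pinned down modulo ${\rm ker}\,\tau_c$ by $\tau_c([\cP_\sigma])=\phi$, though this is not needed for the statement as phrased.
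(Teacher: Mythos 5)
Your proposal is correct and is essentially the paper's own argument: the corollary is stated as an immediate consequence of the higher index formula \eqref{higherindexformula}, with the point being exactly that $\wt{\cL}$ and the $\wt{\cL}_{\chi_j}$ are rank-zero virtual classes while $[{\bf 1}_{\Sigma_{g,\vectnu}}]$ has rank one and $\tau_c([\cP_\sigma])=\phi$ by definition. Your added remarks on naturality under formal differences and on $T_{\Sigma_{g,\vectnu}}$ being an isomorphism are sensible but, as you note, not needed for the statement.
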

Note that Corollary \ref{cor:evengenformula2} is consistent with the special case in Corollary \ref{cor:evengenformula1}.

\subsection{Noncommutative T-duality 
for $\Sigma_{g,\nu}$ 
in odd degree}\label{section:NCTdualityodd}

For the T-duality isomorphism $K^{-1}_{\rm orb}(\Sigma_{g,\vectnu})\leftrightarrow K_1(C^*_r(\Gamma_{g,\vectnu}))$, it is convenient to identify both the $K$-groups with $\ZZ^{2g}$ in terms of the canonical $2g$ group generators $A_i,B_i$ of $\Gamma_{g,\vectnu}$. 

First, consider the torsion free case where $\vectnu$ is empty. The abelianisation of $\Gamma_g$ is $\Gamma_g^{\rm ab}\cong\ZZ^{2g}=\ZZ^g\oplus\ZZ^g$ with canonical generators denoted $A_i^{\rm ab}, B_i^{\rm ab}$, and we can also identify $\Gamma_g^{\rm ab}\cong H_1(\Gamma_g,\ZZ)\cong H_1(\Sigma_g)$ with the generating cycles of the latter denoted $l_{A_i}, l_{B_i}$. More generally, $Y^{\rm ab}\in\Gamma_g^{\rm ab}$ has a corresponding homology class $[l_{Y^{\rm ab}}]\in H_1(\Sigma_g)$. 

 Let $L^\sigma_Y$ denote $\sigma$-left translation by $Y\in\Gamma_g$, which is a unitary in $C^*_r(\Gamma_g,\sigma)$. The inclusion $Y\mapsto L^\sigma_Y$ induces a homomorphism 
$\wt{\beta}_a^\sigma:\Gamma_g\rightarrow K_1(C^*_r(\Gamma_g,\sigma))$ which factors through 
$$\beta_a^\sigma:\Gamma_g^{\rm ab}\rightarrow K_1(C^*_r(\Gamma_g,\sigma)).$$
In fact, $\beta_a^\sigma$ is an isomorphism here, so that $[L_{A_i}^\sigma],[L_{B_i}^\sigma]$ are canonical generators for $K_1(C^*_r(\Gamma_g,\sigma))$ \cite{Mathai06} (in the untwisted case the rational injectivity of $\beta_a$ is a general result of \cite{EN87,BV96}). There is also a canonical homomorphism
$$\beta_t: \Gamma_g^{\rm ab}\rightarrow K_1(\Sigma_g)$$
such that 
\beq
\beta_a^\sigma=\mu_\theta\circ\beta_t\label{valetteidentity}
\eeq
where $\mu_\theta=\mu_\theta^{\Gamma_g}$ is the twisted Baum--Connes assembly map $K_1(B\Gamma_g\simeq\Sigma_g)\rightarrow K_1(C^*_r(\Gamma_g,\sigma))$ \cite{Valette,Mathai06}. It is convenient to identify $K_1(\Sigma_g)$ with $H_1(\Gamma_g)$ under the Chern character, then $\beta_t(Y^{\rm ab})$ corresponds to $[l_{Y^{\rm ab}}]$. 

When $\vectnu$ is nonempty, $\Gamma_{g,\vectnu}^{\rm ab}$ may have torsion elements but its free part is still $\ZZ^{2g}$ and generated by $A_i^{\rm ab}, B_i^{\rm ab}$ as before. Rationally, $\beta_a^\sigma:\Gamma_{g,\vectnu}^{\rm ab}\rightarrow K_1(C^*_r(\Gamma_{g,\vectnu},\sigma))$ still gives an isomorphism, so $K_1(C^*_r(\Gamma_g,\sigma))$ is again generated by $[L_{A_i}^\sigma],[L_{B_i}^\sigma]$. In particular, $[L_Y^\sigma]$ depends only on $Y^{\rm ab}$. The Baum--Connes map is 
$$\mu_\theta=\mu_\theta^{\Gamma_{g,\vectnu}}:K_1^{\rm orb}(\Sigma_{g,\vectnu})=K_1^G(\Sigma_{g'})\rightarrow K_1(C^*_r(\Gamma_g,\sigma)),$$
and the homomorphism $\beta_t:\Gamma_{g,\vectnu}^{\rm ab}\rightarrow K_1^{\rm orb}(\Sigma_{g,\vectnu})$ is such that Eq.\ \eqref{valetteidentity} holds. In particular, $\beta_t$ vanishes on the torsion elements of $\Gamma_{g,\vectnu}^{\rm ab}$ \cite{Valette}. Using the Baum--Connes Chern character \cite{BC88} or delocalised equivariant homology \cite{Marcolli99}, we may identify 
$K_1^{\rm orb}(\Sigma_{g,\vectnu})\cong H_1(\Sigma_{g,\vectnu})\cong \ZZ^{2g}$ with the generating cycles $l_{A_i}, l_{B_i}$ as before. Note that torsion elements of $\Gamma_{g,\vectnu}^{\rm ab}$ such as $C_j$ do not contribute any nontrivial ($K$)-cycles.

With these descriptions, we can now state the effect of the noncommutative T-duality map
$$T_{\Sigma_{g,\vectnu}}: K^{-1}_{\rm orb}(\Sigma_{g,\vectnu})\rightarrow K_1(C^*_r(\Gamma_{g,\vectnu},\sigma)),$$
defined as $\mu_\theta$ composed with Poincar\'{e} duality, as follows.

\begin{proposition}\label{prop:genodd}
Let $[U]\in K^{-1}_{\rm orb}(\Sigma_{g,\vectnu})$ have Chern character whose Poincar\'{e} dual is $[l_{Y^{\rm ab}}]$, then $T_{\Sigma_{g,\vectnu}}([U])=[L_Y^\sigma]\in K_1(C^*_r(\Gamma_{g,\vectnu},\sigma))$. \label{K1Tdual}
\end{proposition}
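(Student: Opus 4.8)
The plan is to unwind the definition $T_{\Sigma_{g,\vectnu}}=\mu_\theta\circ PD_{\Sigma_{g,\vectnu}}$ and to reduce the statement to the single identity
$PD_{\Sigma_{g,\vectnu}}([U])=\beta_t(Y^{\rm ab})$ in $K_1^{\rm orb}(\Sigma_{g,\vectnu})$, after which Eq.~\eqref{valetteidentity} finishes the computation: $T_{\Sigma_{g,\vectnu}}([U])=\mu_\theta(\beta_t(Y^{\rm ab}))=\beta_a^\sigma(Y^{\rm ab})=[L_Y^\sigma]$, the last equality being the definition of $\beta_a^\sigma$ together with the fact, established earlier, that $[L_Y^\sigma]$ depends only on $Y^{\rm ab}$. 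So the content is entirely in understanding how the Poincar\'e duality $PD_{\Sigma_{g,\vectnu}}\colon K^{-1}_{\rm orb}(\Sigma_{g,\vectnu})\to K_1^{\rm orb}(\Sigma_{g,\vectnu})$ acts on the class $[U]$, whose only invariant (after tensoring with $\QQ$, and integrally on the torsion-free part $\ZZ^{2g}$) is the Chern character ${\rm Ch}([U])\in H^1(\Sigma_{g,\vectnu})$.

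For this I would pass to the smooth finite cover $\Sigma_{g'}$ of Eq.~\eqref{SES}, on which $G$ acts by orientation- and Spin-preserving isometries, so that $K^{-1}_{\rm orb}(\Sigma_{g,\vectnu})\cong K^{-1}_G(\Sigma_{g'})$, $K_1^{\rm orb}(\Sigma_{g,\vectnu})\cong K_1^G(\Sigma_{g'})$, and $PD_{\Sigma_{g,\vectnu}}$ is $G$-equivariant Kasparov duality, i.e.\ Kasparov product with the equivariant fundamental class of $\Sigma_{g'}$ (equivalently, cap product with $[\Sigma_{g'}]$). The key input is the compatibility of this $K$-theoretic duality with classical Poincar\'e duality under the Baum--Connes/delocalised (equivariant) Chern character: ${\rm Ch}\bigl(PD_{\Sigma_{g,\vectnu}}[U]\bigr)$ is the homological Poincar\'e dual of ${\rm Ch}([U])\cup{\rm Td}(T\Sigma_{g'})$. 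Since $\Sigma_{g'}$ is a surface, ${\rm Td}(T\Sigma_{g'})=1+\tfrac12 c_1$, and ${\rm Ch}([U])$ sits in degree $1$ with no room for an $H^3$ component, the Todd correction contributes nothing; hence ${\rm Ch}\bigl(PD_{\Sigma_{g,\vectnu}}[U]\bigr)$ is simply the classical Poincar\'e dual of ${\rm Ch}([U])$, which by hypothesis is $[l_{Y^{\rm ab}}]\in H_1(\Sigma_{g,\vectnu})$. Now use the identification $K_1^{\rm orb}(\Sigma_{g,\vectnu})\cong H_1(\Sigma_{g,\vectnu})$ — the very one (via the Baum--Connes Chern character / delocalised equivariant homology) under which the generators $l_{A_i},l_{B_i}$ are fixed and under which $\beta_t(Y^{\rm ab})$ corresponds to $[l_{Y^{\rm ab}}]$ — together with the fact that this identification is injective on the torsion-free part $\ZZ^{2g}$ where both $PD_{\Sigma_{g,\vectnu}}([U])$ and $\beta_t(Y^{\rm ab})$ live (recall $\beta_t$ kills torsion). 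This forces $PD_{\Sigma_{g,\vectnu}}([U])=\beta_t(Y^{\rm ab})$, completing the argument via the first paragraph.

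The main obstacle is the compatibility invoked in the second paragraph: one must pin down that the orbifold $K$-theoretic Poincar\'e duality $PD_{\Sigma_{g,\vectnu}}$ really does intertwine with the \emph{classical} Poincar\'e duality under exactly the Chern-character normalisation used to fix the cycles $l_{A_i},l_{B_i}$ generating $K_1^{\rm orb}$, so that no Todd-class factor, orientation convention, or twisted-sector contribution is lost in the orbifold passage. The reduction to the smooth $G$-surface $\Sigma_{g'}$, where the comparison of Kasparov duality with cap product against the fundamental class is standard and the equivariant index theorem is available, is what makes this tractable; the vanishing of the degree-$3$ part on a $2$-manifold then kills the Todd correction, exactly as in the even-degree computation behind Corollary~\ref{cor:evengenformula2} (and, for $\vectnu$ empty, Proposition~\ref{BC}).
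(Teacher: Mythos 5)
Your proposal is correct and follows the same route the paper intends: the paper states Proposition \ref{prop:genodd} without a separate proof, treating it as an immediate consequence of the preceding identifications ($T_{\Sigma_{g,\vectnu}}=\mu_\theta\circ PD$, the Chern-character identification $K_1^{\rm orb}(\Sigma_{g,\vectnu})\cong H_1(\Sigma_{g,\vectnu})$ under which $\beta_t(Y^{\rm ab})\leftrightarrow[l_{Y^{\rm ab}}]$, and Eq.~\eqref{valetteidentity}), which is exactly the chain you unwind. Your additional care in checking that the Todd-class correction cannot contribute for degree reasons on a surface, and that the comparison can be made $G$-equivariantly on $\Sigma_{g'}$, fills in details the paper leaves implicit but does not change the argument.
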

Note that if $[U]$ is nontrivial, $Y$ is necessarily torsion free. Also, the intersection pairing qof cycles is such that $[l_{A_i}]\# [l_{B_j}]=\delta_{ij}$ so that the Poincar\'{e} duals can be described explicitly. For example, the Poincar\'{e} dual of $[l_{A_i}]$ evaluates to 1 on $[l_{B_i}]$ and kills the other generating cycles; e.g.\ this can be seen from Fig.\ \ref{fig:intersection}.

\section{Bulk-boundary maps}\label{section:extensions}
\subsection{Preliminaries: UCT and extensions}
The classical index theorem linking the Fredholm index of a Toeplitz operator with the winding number of its symbol may be understood in terms of $K$-theory and extensions as follows. We think of $C(\TT)\cong C^*_r(\ZZ)$ acting on $l^2(\ZZ)\cong L^2(\TT)$ so that $C(\TT)$ is generated by translations $L_n,$ $n\in\ZZ$ whose Fourier transforms are multiplication by $e^{{\rm i} n\theta}$. When truncated to Hardy space thought of as $l^2(\NN)$, the translation operator $L_1$ becomes a unilateral shift $\wt{L}_1$, and acquires a dimension 1 cokernel (the subspace for the boundary $n=0$). It is a Fredholm operator $\wt{L}_1=T_f$ with invertible symbol $f=e^{{\rm i}\theta}$ and index$(T_f)=-1=-{\rm Wind}(f)$. The Toeplitz algebra $\cT$ generated by $\wt{L}_1$ is a non-split extension
\beq
1\longrightarrow \cK \longrightarrow \cT\overset{\small \rm symbol}{\longrightarrow} C(\TT)\longrightarrow 1.\label{toeplitz}
\eeq
with kernel $\cK$ the compact operators on $l^2(\NN)$. In the seminal work of Brown--Douglas--Fillmore (BDF) \cite{BDF}, the extension theory of $C^*$-algebras $\cA$ by $\cK$ (denoted ${\rm Ext}(C(X))$ up to a certain notion of equivalence) was shown to be related to $K$-homology in the sense that ${\rm Ext}(C(X))\cong K_1(X)$ (at least when $X$ is a CW complex). From this point of view, Eq.\ \eqref{toeplitz} defines the generating element of $K_1(\TT)$, and the analytic index pairing $K_1(\TT)\times K^{-1}(\TT)\rightarrow K_0(\cK)\cong \ZZ$ taking $([\cT],[f])\mapsto {\rm Index}(T_f)$ realises the topological winding number/index of the symbol $f$ (up to a sign).

As explained in the latter part of this section, we will need to consider extensions of $C^*_r(\Gamma, \sigma)$ by $C(\TT)$ (actually its stabilisation), where $\Gamma=\Gamma_{g,\vectnu}$, in order to model bulk-boundary maps. Such extensions may be studied using the universal coefficient theorem (UCT) due to Rosenberg--Schochet \cite{Ros-Schoc}, and vastly generalises the BDF theory. 
By Corollary 7.2 in  \cite{Ros-Schoc}, $\cK \rtimes \Gamma \cong C^*_r(\Gamma, \sigma)\otimes\cK$ 
satisfies UCT, so that one has a short exact sequence,
\begin{align*}
0\to &{\rm Ext}^1_\ZZ(K_*(C^*_r(\Gamma, \sigma)), K_*(C(\TT))) \to KK^1(C^*_r(\Gamma, \sigma), C(\TT)) \to\\
&\to {\rm Hom}(K_1(C^*_r(\Gamma, \sigma)), K_0(C(\TT))) \oplus {\rm Hom}(K_0(C^*_r(\Gamma, \sigma)), K_1(C(\TT)))
\to 0.
\end{align*}
Since $K_*(C^*_r(\Gamma, \sigma))$ for $\Gamma=\Gamma_{g,\vectnu}$ and $K_*(C(\TT))$ are free abelian groups, ${\rm Ext}^1_\ZZ=0$, therefore
\begin{align*}
KK^1(C^*_r(\Gamma, \sigma), C(\TT))& \cong \\
  {\rm Hom}(K_1(C^*_r(\Gamma, \sigma)), & K_0(C(\TT))) \oplus {\rm Hom}(K_0(C^*_r(\Gamma, \sigma)), K_1(C(\TT))).
\end{align*}
This shows that any element $\alpha \oplus \partial$ of the RHS above determines a unique extension class \cite{Kas80,Blackadar}
$$
0 \to C(\TT) \otimes \cK \to E(\alpha \oplus \partial) \to C^*_r(\Gamma, \sigma) \to 0
$$
giving rise to the 6-term exact sequence in K-theory, with boundary maps
\begin{align*}
\alpha : K_1(C^*_r(\Gamma, \sigma)) \to K_0(C(\TT))\\
\partial : K_0(C^*_r(\Gamma, \sigma)) \to K_1(C(\TT)).
\end{align*}

\subsection{Bulk-boundary maps in Euclidean space}
In \cite{Kellendonk1,PSB} a model for bulk-boundary maps was introduced, in which a bulk $C^*$-algebra was extended by a boundary $C^*$-algebra, and the resulting boundary homomorphisms in $K$-theory taken to be bulk-boundary maps. This was applied successfully to prove equality of bulk and boundary conductivities in the physical context of the Integer quantum Hall effect. In the cases studied there, the bulk algebra is generated by a lattice $\ZZ^2$ of Euclidean magnetic translation symmetries generated by elements $A, B$, while the boundary symmetries comprised only the subgroup $\ZZ_A$ generated by $A$ which translated along the physical codimension-1 boundary (a Euclidean line containing a $\ZZ_A$ orbit which partitions Euclidean space into the ``bulk'' on one side and the ``vacuum'' on the other). The extension was taken to be a Toeplitz-like extension (in the sense of Pimsner--Voiculescu \cite{Pimsner}) with the effect of imposing boundary conditions on the bulk translations operators. Then the bulk-boundary map was (ignoring the modelling of disorder)
$$\partial_{\rm PV}:K_0(C^*_r(\ZZ^2,\sigma))\rightarrow K_1(C^*_r(\ZZ_A))=K_1(C(\TT)),$$
which mapped the class of the Rieffel projection $[\cP_\sigma]$ to the generator of $K_1(C(\TT))$ and mapped the trivial projection [{\bf 1}] to zero. Similarly, under $\alpha_{\rm PV}:K_1(C^*_r(\ZZ^2,\sigma))\rightarrow K_0(C^*_r(\ZZ_A))$, the class of the unitary $[L_B]$ maps to the generator whereas $[L_A]$ maps to zero. An explicit analysis of $\partial_{\rm PV}$ as a Kasparov product with the class of the above extension in $KK^1(C^*_r(\ZZ^2,\sigma),C(\TT))$ was carried out in \cite{Bourne}.

\subsection{Bulk-boundary maps in the hyperbolic plane}
For our hyperbolic plane generalisation, the bulk-algebra is taken to be $C^*_r(\Gamma,\sigma)$ as discussed in Section \ref{section:IQHE}, and we need a sensible notion of a ``boundary'' in $\HH$ and translations therein. A natural choice is to take the subgroup $\ZZ_X$ generated by some hyperbolic element (necessarily non-torsion) $X\in\Gamma$, and the boundary algebra to be $C^*_r(\ZZ_X)\cong C(\TT)$. The geometric meaning is as follows \cite{Iversen,Beardon}. 

Any hyperbolic transformation $X$ of $\HH$ has two idealised fixed points at infinity (they are two points on the boundary circle in the Poincar\'{e} disc model of $\HH$). There is a unique geodesic (hyperbolic straight line), called the \emph{axis} of $X$, connecting these fixed points. A hypercycle for $X$ comprises the points in $\HH$ which are on one side of and a fixed hyperbolic distance away from the geodesic. Thus all the hypercycles for $X$ only intersect (eventually) at the two fixed points --- a manifestation of the non-Euclidean geometry. The orbit of a given point in $\HH$ under $\ZZ_X$ is contained in a hypercycle for $X$ through that point, and will serve as a boundary partitioning $\HH$ into two sides (Fig.\ \ref{fig:hypercycle}). 

\begin{figure}[h]
    \centering
    \includegraphics[width=0.6\textwidth]{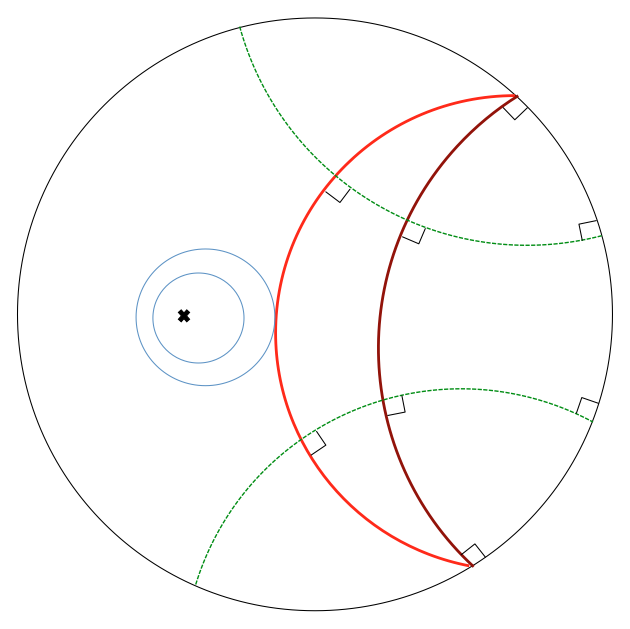}
    \caption{In the Poincar\'{e} disk model of the hyperbolic plane, geodesics (green) are arcs of (Euclidean) circles that orthogonally intersect the boundary circle at infinity. A hyperbolic transformation effects translations along hypercycles (red) connecting its two idealised fixed points at infinity. One such hypercycle is a geodesic (dark red). An elliptic transformation effects ``rotations'' about some fixed point $\times$, and each of its orbits lies in a hyperbolic circle (blue).}
    \label{fig:hypercycle}
\end{figure}

Note that such a ``boundary hypercycle'' is homeomorphic to $\RR$, and in the quotient $\HH/\Gamma=\Sigma$ it becomes a cycle $l_X:S^1=\RR/\ZZ_X\rightarrow\Sigma$. 
Recall that under the identification $\ZZ^{2g}\cong H_1(\Sigma)$, the homology classes are labelled by $[l_{X^{\rm ab}}]$ with $X^{\rm ab}\in\Gamma^{\rm ab}$, and we see that $[l_X]=[l_{X^{\rm ab}}]$. The intersection pairing is $[l_{A_i}]\# [l_{B_j}]=\delta_{ij}$, which means that we can interpret $B_i$ as a translation transverse to boundaries generated along $A_i$, while for $j\neq i$, the translations $A_j, B_j$ are \emph{not} transversal (their orbits cross the boundary an equal number of times in each direction). The geometry of intersections of hypercycles is illustrated for the special case of $\Gamma_{g=2}$ in Fig.\ \ref{fig:intersection}.

\begin{figure}[h]
    \centering
    \includegraphics[width=0.7\textwidth]{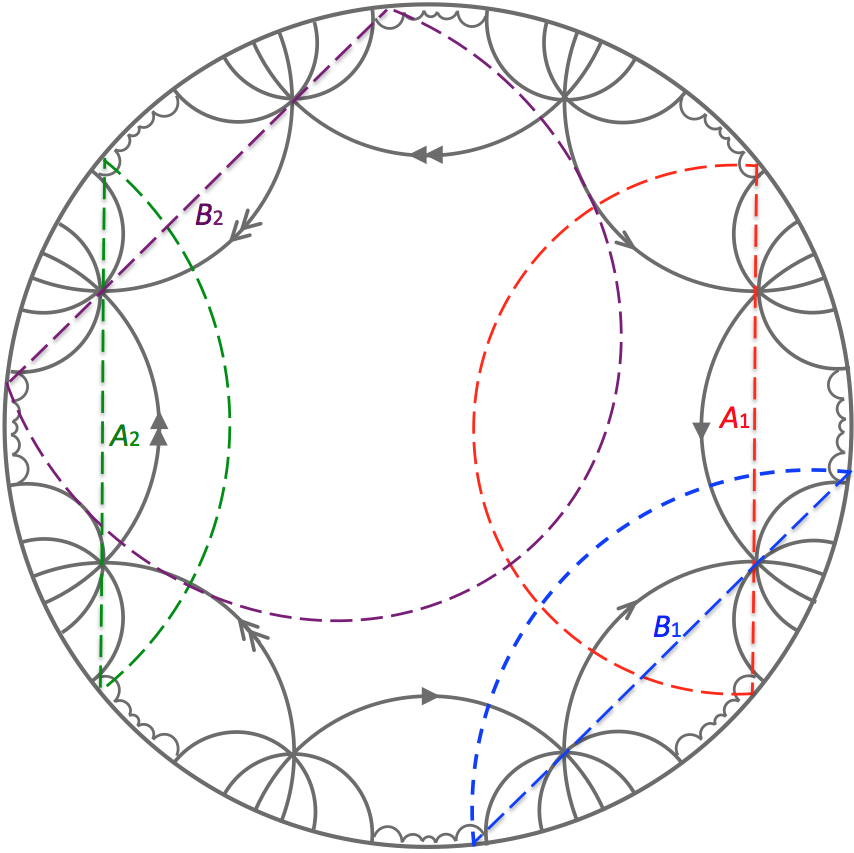}
    \caption{A tiling of the Poincar\'{e} disk by fundamental polygons for $\Gamma_g$ with $g=2$ (edges near the boundary circle are omitted to reduce clutter). The vertices of the polygons constitute a Cayley graph for $\Gamma_g$. The edges of the central octagon are pairwise identified according to the arrows to form the Riemann surface $\Sigma_g$. For each of the hyperbolic generators $A_1, B_1, A_2, B_2$ of $\Gamma_2$, two hypercycles are drawn (dashed lines). We see that $(A_i, B_i)$ are transversal to each other for $i=1$ and also $i=2$, but the other pairs are not. For instance, although the large $B_2$ hypercycle intersects the large $A_1$ hypercycle twice in total, their \emph{signed} intersection number is $0$.}
    \label{fig:intersection}
\end{figure}

Recall that $K_0(C^*_r(\Gamma))\cong \ZZ[\cP_\sigma]\oplus {\rm ker}\,\tau_c$ where we had distinguished $\ZZ[\cP_\sigma]$ as the classes that contribute to Hall conductance in Proposition \ref{BC} and Remark \ref{rem:cohomologous}. Also, we explained in Section \ref{section:NCTdualityodd} that $K_1(C^*_r(\Gamma,\sigma))\cong\ZZ^{2g}$ with all classes realisable by $[L_{Y^{\rm ab}}^\sigma]$ for some $Y^{\rm ab}\in\Gamma^{\rm ab}$. Guided by the geometry of the hypercyclic boundary defined by $X$ and the bulk-boundary map in the Euclidean case, we will define 
\beq
\partial_X  ([\cP_\sigma]) = [\zeta], \qquad \partial_X({\rm ker}\,\tau_c)=0, \label{boundarymap1}
\eeq
where $[\zeta]$ is the generator of $K_1(C(\TT))\cong K^{-1}(\TT)\cong\ZZ$, and
\beq
\alpha_X([L_Y^\sigma]) =\alpha_X([L_{Y^{\rm ab}}^\sigma])= ([l_{Y^{\rm ab}}]\#[l_{X^{\rm ab}}])\cdot[{\bf 1}].\label{boundarymap2}
\eeq
The intuition behind Eq.\ \eqref{boundarymap2} is that each translation transverse to the boundary gets modified to a ``half-translation'' and leaves behind a zero mode. This is analogous to the interpretation of the classical Fredholm index of the unilateral shift operator on $\NN$. Eq.\ \eqref{boundarymap1} generalises the Euclidean case, with the only change being that the extra non-trivial classes in $K_0(C^*_r(\Gamma,\sigma))$, which are due to the conical singularities $p_j$, are mapped to zero under $\partial_X$. This is reasonable since the boundary hypercycle is not generally preserved by the elliptic symmetries $C_j$ so it cannot ``see'' the $K$-theory classes arising from them.

Through the UCT, the combination $\alpha_X\oplus\partial_X$ specifies (albeit abstractly) the class of an extension of $C^*_r(\Gamma_{g,\vectnu},\sigma)$ by $C^*_r(\ZZ_X)\cong C(\TT)$, generalising the Toeplitz-like extension used in the Euclidean case. The sign ambiguity in $[\zeta]$ corresponds to the choice of side of the boundary to take as the bulk; notice that the induced orientation on the boundary depends on this choice.

\subsection{T-duality simplifies the bulk-boundary correspondence}\label{section:dualitytrivializes}
A hypercyclic boundary containing an orbit of $\ZZ_X$ gives rise to a cycle $l_X:S^1\rightarrow\Sigma$, and there is a pullback $l_X^*$ in $K$-theory which we can think of as a ``restriction map'' to the immersed image of $l_X$. When we regard the boundary as $\RR\simeq E\ZZ_X$, then the quotient under $\ZZ_X$ is $S^1=\RR/\ZZ_X\simeq B\ZZ_X$, and we can apply T-duality \eqref{circleTduality} to this circle.

\begin{theorem}\label{BBCsimplify}
The following diagram commutes for $\bullet=0,1$,
\beq
\xymatrix{
K^\bullet_{\rm orb}(\Sigma)  \ar[d]^{l_X^*} \ar[r]^{\sim}_{T_\Sigma} & K_\bullet(C^*_r(\Gamma, \sigma)) \ar[d]^{\partial_X} \\
K^\bullet(S^1) \ar[r]^{\sim}_{T_{\rm circle}} & K^{\bullet-1}(\TT)  } .\nonumber
\eeq
\end{theorem}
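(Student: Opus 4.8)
The plan is to verify commutativity of the square separately in the even ($\bullet=0$) and odd ($\bullet=1$) degrees, by chasing the explicit generators of $K^\bullet_{\rm orb}(\Sigma)$ through both composites. Since all the relevant $K$-groups are free abelian (by Eq.\ \eqref{fuchsianKtheory} and the fact that $K^\bullet_{\rm orb}(\Sigma)$ is torsion-free), it suffices to check the diagram on a generating set, so the argument reduces to matching the known values of $T_\Sigma$, $\partial_X$, $T_{\rm circle}$ and $l_X^*$ on generators, all of which have been tabulated earlier: Corollaries \ref{cor:evengenformula1}--\ref{cor:evengenformula2} and Proposition \ref{prop:genodd} for the top and right arrows, Eqs.\ \eqref{boundarymap1}--\eqref{boundarymap2} for $\partial_X$ and $\alpha_X$, and Eq.\ \eqref{circleTduality} for $T_{\rm circle}$.

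For the even case, the generators of $K^0_{\rm orb}(\Sigma)$ are $[{\bf 1}_\Sigma]$, $[\wt\cL]$, and (when $\vectnu\neq\emptyset$) the orbifold classes $[\wt\cL_{\chi_j}]$. Going right then down: $T_\Sigma([{\bf 1}_\Sigma])=[\cP_\sigma]$ by Corollary \ref{cor:evengenformula2}, and $\partial_X([\cP_\sigma])=[\zeta]$ by \eqref{boundarymap1}; meanwhile $T_\Sigma$ sends $[\wt\cL]$ and $[\wt\cL_{\chi_j}]$ into $\ker\tau_c$, which $\partial_X$ kills. Going down then right: the restriction $l_X^*$ sends $[{\bf 1}_\Sigma]$ to $[{\bf 1}_{S^1}]$, which $T_{\rm circle}$ sends to $[\zeta]$ by \eqref{circleTduality}; so the $[{\bf 1}_\Sigma]$ generator matches. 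The substantive point is that $l_X^*$ kills $[\wt\cL]$ and each $[\wt\cL_{\chi_j}]$: for $[\wt\cL]=[\cL]-[{\bf 1}_\Sigma]$ this holds because the top-degree Chern class restricts to zero on the $1$-dimensional immersed cycle $l_X$ (equivalently, a degree-$2$ cohomology class restricts trivially to $S^1$); for the orbifold classes $[\wt\cL_{\chi_j}]$, one argues that the hypercyclic boundary $l_X$ may be chosen (up to homotopy in $\Sigma$, since it only depends on the free homology class $[l_{X^{\rm ab}}]$) to avoid the conical points $p_j$, so its immersed image lies in the smooth locus $\Sigma\setminus\{p_1,\ldots,p_r\}$ where the orbifold bundles $\wt\cL_{\chi_j}$ restrict to (virtually) trivial bundles with vanishing reduced $K$-theory class --- hence $l_X^*[\wt\cL_{\chi_j}]=0$. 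This last point is the main obstacle and deserves care: one needs that the pullback along $l_X$ of the character-$\chi_j$ orbifold line bundle is $K$-theoretically trivial, which follows because $l_X$ factors through the smooth $1$-complex $\Sigma\setminus\{p_j\}$ and any (virtual) line bundle on $S^1$ of rank $0$ is trivial in $K^0(S^1)\cong\ZZ$.

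For the odd case, fix a generator $[U]\in K^{-1}_{\rm orb}(\Sigma)$ whose Chern character is Poincar\'e dual to $[l_{Y^{\rm ab}}]$. Going right then down: $T_\Sigma([U])=[L_Y^\sigma]$ by Proposition \ref{prop:genodd}, and then $\alpha_X([L_Y^\sigma])=([l_{Y^{\rm ab}}]\#[l_{X^{\rm ab}}])\cdot[{\bf 1}]$ by \eqref{boundarymap2}. Going down then right: $l_X^*([U])\in K^{-1}(S^1)\cong\ZZ$ is computed by pairing, i.e.\ its integer is the evaluation of the Poincar\'e dual of $\mathrm{Ch}(U)$ against the fundamental class of $l_X$, which by the very definition of the intersection pairing on $H_1(\Sigma)$ equals $[l_{Y^{\rm ab}}]\#[l_{X^{\rm ab}}]$. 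Hence $l_X^*([U])$ is $([l_{Y^{\rm ab}}]\#[l_{X^{\rm ab}}])$ times the winding-number-one unitary $[W]$ generating $K^{-1}(S^1)$, and $T_{\rm circle}([W])=[{\bf 1}_\TT]$ by \eqref{circleTduality}; so both composites give $([l_{Y^{\rm ab}}]\#[l_{X^{\rm ab}}])\cdot[{\bf 1}_\TT]=([l_{Y^{\rm ab}}]\#[l_{X^{\rm ab}}])\cdot[{\bf 1}]$. The two cases together establish commutativity on generators, hence on the whole groups by freeness and additivity of all four maps, completing the proof. One bookkeeping subtlety to nail down is the orientation/sign convention: the sign ambiguity in $[\zeta]$ (choice of bulk side) and the sign of the intersection pairing must be fixed consistently so that the two triangles carry the same sign, but once a convention is chosen the diagram commutes on the nose.
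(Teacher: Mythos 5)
Your proposal is correct and follows essentially the same route as the paper: both arguments reduce to a generator chase, using Corollaries \ref{cor:evengenformula1}--\ref{cor:evengenformula2} and Eq.\ \eqref{boundarymap1} in even degree, and Proposition \ref{prop:genodd}, Eq.\ \eqref{boundarymap2} and the intersection-pairing computation of $l_X^*[U]$ in odd degree. The extra care you take in justifying $l_X^*[\wt{\cL}_{\chi_j}]=0$ (moving the hypercycle off the conical points) is a reasonable elaboration of the paper's terser remark that only the rank invariant survives restriction to the one-dimensional $S^1$.
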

\begin{proof}
We check this by computing the maps on generators of $K^\bullet_{\rm orb}(\Sigma)$. For $\bullet=0$, since $S^1$ is one dimensional, it is clear that only the rank invariant survives under $l_X^*$. We compute
\beq
T_{\rm circle}\circ l_X^*([{\bf 1}]_{\Sigma})=T_{\rm circle}([{\bf 1}]_{S^1})=[\zeta]\nonumber
\eeq
By Corollaries \ref{cor:evengenformula1}, \ref{cor:evengenformula2}, and the formula for $\partial_X$ in Eq.\ \eqref{boundarymap1}, we also have
\beq
\partial_X\circ T_\Sigma([{\bf 1}]_{\Sigma})=\partial_X([\cP_\sigma])=[\zeta].\nonumber
\eeq
As for the other generators $[\wt{\cL}],[\wt{\cL}_{\chi_j}]$ of $K^0_{\rm orb}(\Sigma)$, they are mapped to zero by $l_X^*$ (and thus by $T_{\rm circle}\circ l_X^*$), while $T_\Sigma$ takes them to ${\rm ker}\,\tau_c$ which vanishes under $\partial_X$.

For $\bullet=1$, let $[U]\in K^{-1}_{\rm orb}(\Sigma)$ have Chern character ${\rm Ch}([U])$ Poincar\'{e} dual to $[l_{Y^{\rm ab}}]$, and $[W]$ be a generator of $K^{-1}(S^1)$. We compute
\begin{align*}
\langle{\rm Ch}(l_X^*[U]),[S^1]\rangle&=\langle{l_X^*\rm Ch}([U]),[S^1]\rangle \\ 
&=\langle {\rm PD}([l_{Y^{\rm ab}}]), (l_X)_*([S^1]) \rangle \\
&=\langle {\rm PD}([l_{Y^{\rm ab}}]), [l_{X^{\rm ab}}] \rangle \\
&=[l_{Y^{\rm ab}}]\#[l_{X^{\rm ab}}] \\
&=\langle[l_{Y^{\rm ab}}]\#[l_{X^{\rm ab}}]\cdot [W] , [S^1]\rangle,
\end{align*}
which combined with Eq.\ \eqref{circleTduality} gives
\beq
T_{\rm circle}\circ l_X^*([U])=T_{\rm circle}(([l_{X^{\rm ab}}]\#[l_{Y^{\rm ab}}])\cdot[W])=[l_{X^{\rm ab}}]\#[l_{Y^{\rm ab}}]\cdot[{\bf 1}]\nonumber
\eeq
On the other hand, using Proposition \ref{prop:genodd} we also get
\beq
\alpha_X\circ T_\Sigma([U])=\alpha_X([L_Y^\sigma])=[l_{X^{\rm ab}}]\#[l_{Y^{\rm ab}}]\cdot[{\bf 1}].\nonumber
\eeq
\end{proof}

To summarise, the relatively complicated and abstract boundary map,
\beq
\alpha_X\oplus\partial_X \colon K_*(C^*_r(\Gamma, \sigma))  \to K^{*-1}(\TT) \nonumber
\eeq
is equivalent to the conceptually simple geometric restriction-to-boundary maps,
\beq
l_X^* \colon K^\bullet_{\rm orb}(\Sigma)  \to K^\bullet(S^1), \qquad \bullet=0,1. \nonumber
\eeq
Reversing the argument, this says that the extension defined by $\alpha_X\oplus\partial_X$ indeed correctly captures the geometry of the bulk-boundary relationship.

Theorem \ref{BBCsimplify} generalises to the hyperbolic plane geometry, the Euclidean space result \cite{MT1} and the Nil and Solv geometry results \cite{HMT1,HMT2}.

\subsection{Cyclic cohomology, Hall conductance and boundary conductance}\label{section:cyclic}

Since the group $\Gamma=\Gamma_{g,\vectnu}$ is a cocompact discrete subgroup of $\slr$, it has the {\em rapid decrease} (RD) property, see \cite{Harpe} and chapter 8 in \cite{Valette-book}. It follows that there is a smooth subalgebra 
 $C^*_r(\Gamma, \sigma)^\infty \hookrightarrow C^*_r(\Gamma_g, \sigma)$ inducing an isomorphism in K-theory. The periodic cyclic cohomology $HP^{\rm even}(C^*_r(\Gamma_g, \sigma)^\infty)$ includes $[\tau]$ and $[\tau_K]$ where $\tau$ is the von Neumann trace on $C^*_r(\Gamma, \sigma)^\infty$ and $\tau_K$ is the conductance cocycle (see \cite{CHMM,Marcolli06} for details and its meaning as a higher genus Kubo formula), 
\beq
\tau_K(f_1,f_2,f_3) = \frac{1}{g}\sum_{i=1}^g
\tau(f_1(\delta_i(f_2)\delta_{i+g}(f_3) - \delta_{i+g}(f_2)\delta_i(f_3)))\nonumber 
\eeq 
for $f_1, f_2, f_3 \in C^*_r(\Gamma, \sigma)^\infty$. Here the derivations $$\delta_i(f)(\gamma)=a_i(\gamma)f(\gamma), \qquad 
\delta_{i+g}(f)(\gamma) = b_i(\gamma) f(\gamma),$$ where $a_i, b_i$ are the group 1-cocycles on $\Gamma$ corresponding 
to the generators $A_i, B_i$.

The Kubo conductance cocycle $\tau_K$ is actually cohomologous to $\tau_c$ (\cite{CHMM}, Theorem 4.1 of \cite{Marcolli01}), which facilitates the calculation of the range of $\tau_c$ on $K_0(C^*_r(\Gamma,\sigma))$ through a higher twisted index theorem (Eq.\ \eqref{higherindexformula}) --- the range is $\phi\ZZ$ where $\phi=2(g-1)+\sum_{j=1}^r\frac{1}{\nu_j}\in\QQ$ \cite{Marcolli01}, with the value $\phi$ achieved by $[\cP_\sigma]$. Thus $\phi\ZZ$ is the range of possible values of the Hall conductance.

For conductance along the boundary, we also need to know that the periodic cyclic cohomology of the smooth functions on the circle $C^\infty(\TT)$ is
\beq	
H^1(S^1) = HP^{\rm odd}(C^\infty(\TT)) = \CC[\tau_w] \nonumber
\eeq
where $\tau_w(U)$ evaluates the winding number of a unitary function $U$ on $\TT$. Just as in the Euclidean case \cite{Kellendonk1}, $\tau_K$ should be obtained from the boundary conductance 1-cocycle $\tau_{\rm bd}$ under the map $\breve{\partial}_X$ in $HP$ dual to $\partial_X$, that is,
\beq
\tau_{\rm bd}(\partial_X([\cP]))=(\breve{\partial}_X\tau_{\rm bd})([\cP])\equiv\tau_K([\cP]) ,\qquad [\cP]\in K_0(C^*_r(\Gamma,\sigma)).\label{dualitypairing}
\eeq
It suffices to consider $[\cP]=[\cP_\sigma]$, whence Eq.\ \eqref{dualitypairing} reads
\beq
\tau_{\rm bd}([\zeta])=\tau_K([\cP_\sigma])=\phi.\nonumber
\eeq
Since $\tau_w([\zeta])=1$, we have $\tau_{\rm bd}=\phi\tau_w$ as the boundary conductance 1-cocycle.

Notice that there is a geometric factor of $\phi$ in $\tau_{\rm bd}$ which was simply $1$ in the Euclidean (genus 1) case. Intuitively, the way in which the boundary is embedded in $\HH$ depends strongly on the data of $g,\vectnu$ in $\G$, so that the boundary does ``feel'' the bulk geometry.

\section{Modelling of disorder with crossed products}\label{section:generalisations}

\subsection{Cyclic cocycles and Hall conductance in the presence of disorder} \label{sec:cyclicHall}
Following \cite{Bellissard, Prodan}, one can model the effect of disorder by using a compact space $\Omega$ of disorder configurations on which $\Gamma=\G$ acts. Instead of $C^*_r(\Gamma, \sigma)$, we need to use the ($\sigma$-twisted) reduced crossed product algebra $C(\Omega)\rtimes_{\sigma, r} \Gamma$.

We wish to discuss cyclic cocycles on a
smooth subalgebra of the (reduced) crossed product algebra $C(\Omega)\rtimes_{\sigma} \Gamma$, denoted 
$(C(\Omega)\rtimes_{\sigma} \Gamma)^\infty$. We need to assume besides a minimal action of $\Gamma$
on $\Omega$, that the $\Gamma$-action has an invariant probability measure $\mu$. This is possible when $\Gamma$ acts on $\Omega$
via an amenable quotient. We recall that a locally compact Hausdorff group is said to be \emph{amenable} if it admits a left (or right) invariant \emph{mean}, which for discrete groups simplifies to having an invariant finitely additive probability measure. Amenable groups are closed under many operations such as taking subgroups, quotients and group extensions. However, the free group on two generators is not amenable, and so $\Gamma$ is not amenable.

Let us examine the hypothesis that $\Gamma$ acts via an amenable quotient. 
There is a surjective homomorphism $\Gamma=\G \twoheadrightarrow F_{g}$ onto the free group on $g$ generators, mapping the generators $C_j \mapsto e$ and $B_i\mapsto e$. 
Now any amenable group $\mathscr{A}$ with $g$ generators is a quotient of $F_g$, therefore there is a surjective homomorphism $\Gamma \twoheadrightarrow \mathscr{A}$ and any
minimal action of $\mathscr{A}$ on $\Omega$ satisfies the hypothesis. Finally, the constructive proof of the remarkable Corollary 1.5 in \cite{Hjorth} asserts that every countable discrete group acts freely and minimally on a Cantor set, so there are many such examples.

A cyclic 2-cocycle on $(C(\Omega)\rtimes_{\sigma} \Gamma)^\infty$ can be defined from a group 2-cocycle $c$ as follows.
Let $f_j(x, \gamma) \in (C(\Omega)\rtimes_{\sigma} \Gamma)^\infty$ for $j=0,1,2$. Then 
\begin{align*}
{\rm tr}_{c, \mu}(f_0,f_1,f_2)&=\sum_{\gamma_0\gamma_1\gamma_2=1} 
\int_\Omega  \big[f_0(x, \gamma_0)f_1(\gamma_0^{-1}x, \gamma_1)f_2({(\gamma_0\gamma_1)}^{-1}x, \gamma_2) 
\\
&\qquad\qquad\qquad\qquad\qquad\qquad c(1, \gamma_1, \gamma_1\gamma_2) \sigma(\gamma_1, \gamma_2)\big]\, d\mu(x)
\end{align*}
defines a cyclic 2-cocycle on $(C(\Omega)\rtimes_{\sigma} \Gamma)^\infty$.

Let $c$ denote the {\em area cocycle} defined in Section \ref{sect:FM}. Recall that we had $\tau_c([\cP_\sigma])=\phi$ where $\phi=2(g-1)+(r-\nu)\in\QQ$ is the orbifold Euler characteristic of $\Sigma=\Sigma_{g,\vectnu}$. The natural
inclusion
                $C^{*}_{r}(\Gamma, \sigma)=\CC\rtimes_\sigma \Gamma\hookrightarrow C(\Omega)\rtimes_\sigma\Gamma\,$ takes the (virtual)
 projection $\cP_\sigma$ to a (virtual) projection $\sP_\sigma$ in a matrix algebra over $C(\Omega)\rtimes_\sigma\Gamma$, and $\sP_\sigma$ does not depend on $x\in\Omega$. So we have
\begin{equation}
{\rm tr}_{c, \mu}([\sP_\sigma])\equiv{\rm tr}_{c, \mu}(\sP_\sigma, \sP_\sigma, \sP_\sigma) =\tau_c([\cP_\sigma]) \int_\Omega d\mu(x)=\phi.\label{disorderHallcontribution}
\end{equation}
Note that $[\sP_\sigma]$ is then a non-torsion class.

We can understand $\sP_\sigma$ more geometrically through its T-dual, constructed in the following way. Let $M= \Omega \times_{\Gamma} \hyp, \,$ which is an orbifold fibre bundle over $\Sigma=\Sigma_{g,\vectnu}$. 
Let $\cE$ be a (orbifold) vector bundle over $M$, then there is a twisted foliation index theorem \cite{BMprogress} generalizing the index theorem in \cite{BM18}
\beq
{\rm tr}_{c, \mu}(\mu_\theta([\dirac_M\otimes \mathcal{E}])) = \frac{1}{2\pi}\int_{M} e^\B \wedge {\rm Ch}(\mathcal{E})\wedge
\omega_c \label{twistedfoliation}
\eeq
where $\omega_c $ is the hyperbolic volume 2-form on $\Sigma$ pulled back to $M$, corresponding to the area 2-cocycle $c$ on 
$\Gamma$. In analogy to the Dirac operator playing the role of a fundamental class, $\dirac_M$ in Eq.\ \eqref{twistedfoliation} is the Dirac operator on the orbifold $\Sigma=\Sigma_{g,\vectnu}$ lifted to $M$, which is elliptic along the leaves of the foliation, and $\dirac_M\otimes\cE$ is its $\cE$-twisted version. This generalises $\dirac_\cE^+$ in Section \ref{sect:fuchsianTduality}. Also, $\mu_\theta$ is the twisted Baum--Connes map with coefficients $C(\Omega)$, and we define T-duality to be the map
$$
T:K^0_{\rm orb}(M)\ni [\cE]\mapsto \mu_\theta([\dirac_M\otimes\cE])\in K_0(C(\Omega)\rtimes_\sigma\Gamma).
$$

The integral in Eq.\ \eqref{twistedfoliation} simplifies to
$$
\frac{1}{2\pi}\int_{M} e^\B \wedge {\rm Ch}(\mathcal{E})\wedge
\omega_c =\frac{1}{2\pi} \int_{M} \text{rank}(\cE) \omega_c .
$$
Since the action of $\Gamma$ on $\Omega$ is minimal, so that $M$ is connected (Lemma 3, \cite{BO07}),  the integer valued function $\text{rank}(\cE)$
is a constant, therefore there is a further simplification
\begin{align*}
\frac{1}{2\pi}\int_{M} \text{rank}(\cE) \omega_c  = \text{rank}(\cE) \frac{1}{2\pi}\int_{M} \omega_c & =  \text{rank}(\cE)\mu(\Omega) \frac{1}{2\pi}\int_{\Sigma} \omega_c \\
&= \phi\, \text{rank}(\cE)=\text{rank}(\cE){\rm tr}_{c, \mu}([\sP_\sigma]).
\end{align*}
This means that up to a term in the kernel of ${\rm tr}_{c, \mu}$, the T-duality map $T:K^0_{\rm orb}(M)\rightarrow K_0(C(\Omega)\rtimes_\sigma\Gamma)$ takes $[{\bf 1}_M]$ (the class of the trivial line bundle over $M$) to $[\sP_\sigma]$, whereas $\wt{K}^0_{\rm orb}(M)$ is mapped to the kernel of ${\rm tr}_{c, \mu}$, i.e.\
\beq
    T([\mathcal{E}])={\rm rank}(\mathcal{E})[\sP_\sigma]+\mathcal{C},\qquad \mathcal{C}\in {\rm ker}\,\,{\rm tr}_{c, \mu}.\label{disorderTaction}
\eeq

As in Section \ref{section:cyclic}, we can define the disorder-averaged Kubo conductivity cocycle ${\rm tr}_{K, \mu}$ and show that it is cohomologous to ${\rm tr}_{c, \mu}$. We see that the subgroup $\ZZ[\sP_\sigma]\subset K_0(C(\Omega)_\sigma\Gamma))$ may be interpreted as that which contributes to the Hall conductance in the presence of disorder.

\subsubsection{Boundary conductivity cocycles in the presence of disorder}
For the boundary algebra, we use $C(\Omega)\rtimes\ZZ_X$, noting that $\sigma$ is trivial on $\ZZ_X$. It is known (cf.\ \cite{Putnam}) that 
\begin{equation}
K_0(C(\Omega)) = C(\Omega, \Z), \qquad K_1(C(\Omega)) =0.\label{cantorKgroups} 
\end{equation}
From the Pimsner--Voiculescu (or Kasparov spectral) sequence, we deduce that 
$$K_1(C(\Omega)\rtimes\ZZ_X)\cong K_0(C(\Omega))^{\ZZ_X}=C(\Omega,\ZZ)^{\ZZ_X},$$
i.e.\ the $\ZZ_X$-invariant part of $C(\Omega,\ZZ)$. In particular, the class $[\xi]\in K_1(C(\Omega)\rtimes\ZZ_X)$ induced from the unitary $\zeta\in C^*_r(\ZZ_X)$ under inclusion of scalars $\CC\hookrightarrow C(\Omega)$, corresponds to the constant invariant function $\Omega\mapsto 1$.

Let us define the boundary map $\partial_X:K_0(C(\Omega)\rtimes_\sigma\Gamma)\rightarrow K_1(C(\Omega)\rtimes\ZZ_X)$ to be
$$\partial_X  ([\sP_\sigma]) = [\zeta], \qquad \partial_X({\rm ker}\,{\rm tr}_{c, \mu})=0,$$
generalising Eq.\ \eqref{boundarymap1}.

\subsubsection{Bulk-boundary map is T-dual of restriction map in presence of disorder}
\begin{proposition}\label{BBCsimplifydisordered}
The following diagram commutes,
\beq
\xymatrix{
K^0_{orb}(M)  \ar[d]^{\iota_X^*} \ar[r]^{\sim\qquad}_{T\qquad} & K_0(C(\Omega)\rtimes_{\sigma} \Gamma) \ar[d]^{\partial_X} \\
K^0(M_1) \ar[r]^{\sim\qquad}_{T_1\qquad} & K_1(C(\Omega)\rtimes  \ZZ_X)  }.  \nonumber
\eeq
Here,  
$M= \Omega \times_{\Gamma} \hyp, \,$
$M_1=\Omega\times_{\ZZ_X}\RR$ with $\RR$ a hypercycle for $X$ whose inclusion in $\HH$ induces $M_1\stackrel{\iota_X}{\rightarrow}M$.
\end{proposition}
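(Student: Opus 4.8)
The plan is to mirror the proof of Theorem \ref{BBCsimplify}, replacing $C^*_r(\Gamma,\sigma)$ by the crossed product $C(\Omega)\rtimes_\sigma\Gamma$ and the surface $\Sigma$ by the orbifold fibre bundle $M=\Omega\times_\Gamma\HH$, and checking commutativity on generators. The key structural input is that $M$ is connected (since the $\Gamma$-action on $\Omega$ is minimal, by Lemma 3 of \cite{BO07}), so that the rank of a bundle over $M$ is a well-defined integer and $K^0_{\rm orb}(M)$ splits as $\ZZ[{\bf 1}_M]\oplus \wt K^0_{\rm orb}(M)$. Under $T$, Eq.\ \eqref{disorderTaction} tells us $[{\bf 1}_M]\mapsto[\sP_\sigma]$ and $\wt K^0_{\rm orb}(M)$ maps into $\ker\,{\rm tr}_{c,\mu}$; the right-hand vertical map $\partial_X$ then sends $[\sP_\sigma]\mapsto[\zeta]$ and kills $\ker\,{\rm tr}_{c,\mu}$ by definition. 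So going down-then-right sends $[{\bf 1}_M]\mapsto[\zeta]=[\xi]$ and kills $\wt K^0_{\rm orb}(M)$.

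For the other route, I would argue that $\iota_X^*$ kills $\wt K^0_{\rm orb}(M)$ because $M_1=\Omega\times_{\ZZ_X}\RR$ is (fibrewise) one-dimensional: the restriction of a bundle to $M_1$ retains only its rank, exactly as $l_X^*$ retained only the rank invariant in the proof of Theorem \ref{BBCsimplify}. More precisely, $\iota_X^*[{\bf 1}_M]=[{\bf 1}_{M_1}]$ and, since $C(\Omega)\rtimes\ZZ_X$ has $K_1\cong C(\Omega,\ZZ)^{\ZZ_X}$ with $[\xi]$ the constant function $1$, the circle T-duality $T_1$ (the crossed-product/Baum--Connes map for $\ZZ_X$ with coefficients $C(\Omega)$, as in Eq.\ \eqref{circleTduality} but with coefficients) sends $[{\bf 1}_{M_1}]\mapsto[\xi]$. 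This matches the down-then-right computation. It remains to confirm that $T_1$ kills the image of $\wt K^0_{\rm orb}(M)$ under $\iota_X^*$; but that image is already zero, so there is nothing more to check. Hence the square commutes on the generator $[{\bf 1}_M]$ and on the complementary summand, and therefore on all of $K^0_{\rm orb}(M)$.

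The main obstacle is making precise the claim that $\iota_X^*$ annihilates the reduced $K$-theory $\wt K^0_{\rm orb}(M)$ and that this is compatible with the trace-kernel description on the algebraic side. Since $M_1\to M$ is an inclusion of a fibrewise-$\RR$ subbundle into a fibrewise-$\HH$ bundle over a $1$-dimensional base mod the single transverse direction, one wants a Mayer--Vietoris or dimension argument showing that a bundle restricted to $M_1$ is trivial iff its rank vanishes --- equivalently, that $K^0(M_1)\cong C(\Omega,\ZZ)^{\ZZ_X}$ is generated by $[{\bf 1}_{M_1}]$ together with classes that are themselves pulled back from rank data. I would handle this by noting $M_1$ deformation retracts fibrewise onto $\Omega$ (the $\RR$-fibre is contractible), so $K^0(M_1)\cong K^0_{\ZZ_X}(\Omega)\cong K_0(C(\Omega)\rtimes\ZZ_X)\cong C(\Omega,\ZZ)^{\ZZ_X}$, and $\iota_X^*$ on a bundle records precisely its (locally constant, hence by connectedness of $M$ constant) rank together with the $\Omega$-dependence coming from $C(\Omega)$; the point is that $\wt K^0_{\rm orb}(M)$, being generated by virtual bundles of rank $0$ that carry the extra orbifold/transverse data, maps to $0$. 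This is the one place where the geometry of the hyperbolic $\HH$-fibre (as opposed to a torus) is genuinely used, exactly paralleling the role of "$S^1$ is one-dimensional" in Theorem \ref{BBCsimplify}.
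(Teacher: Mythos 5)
Your first two paragraphs reproduce the paper's proof essentially verbatim: connectedness of $M$ gives the rank splitting $K^0_{\rm orb}(M)=\ZZ[{\bf 1}_M]\oplus\wt{K}^0_{\rm orb}(M)$, Eq.\ \eqref{disorderTaction} together with the definition of $\partial_X$ computes the right-then-down route as ${\rm rank}(\cE)[\xi]$, and the identity $T_1\circ\iota_X^*([\cE])={\rm rank}(\cE)[\xi]$ closes the square. That is exactly the argument in the text.

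The one place your write-up goes astray is the final paragraph, where you try to justify the identification $K^0(M_1)\cong C(\Omega,\ZZ)^{\ZZ_X}$. The space $M_1=\Omega\times_{\ZZ_X}\RR$ is the mapping torus of the generating homeomorphism $h$ of $\Omega$: it fibres over $S^1=\RR/\ZZ_X$ with fibre $\Omega$ and does \emph{not} deformation retract onto $\Omega$ (the $\RR$-factor is contractible, but not $\ZZ_X$-equivariantly so). Likewise the chain $K^0(M_1)\cong K^0_{\ZZ_X}(\Omega)\cong K_0(C(\Omega)\rtimes\ZZ_X)$ misuses Green--Julg, which requires a compact group, and lands in the wrong $K$-group and the wrong (co)invariants: by Pimsner--Voiculescu one has $K_0(C(\Omega)\rtimes\ZZ_X)\cong C(\Omega,\ZZ)_{\ZZ_X}$ (coinvariants), whereas it is $K_1(C(\Omega)\rtimes\ZZ_X)\cong C(\Omega,\ZZ)^{\ZZ_X}$ (invariants) that $T_1$ identifies with $K^0(M_1)$ --- note the degree shift built into the bottom row of the diagram. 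The correct justification, which is what the paper's ``standard arguments'' amount to, is the Wang/Mayer--Vietoris sequence of the mapping torus: since $K^{-1}(\Omega)=0$, it gives $K^0(M_1)\cong\ker(1-h^*)\subset K^0(\Omega)=C(\Omega,\ZZ)$, so every class in $K^0(M_1)$ is an invariant rank function, realised by a trivial bundle over the clopen invariant sub-mapping-torus on which that function equals $1$. Granting this, a bundle restricted from the connected space $M$ has constant rank, hence $\iota_X^*[\cE]={\rm rank}(\cE)[{\bf 1}_{M_1}]$, and the remainder of your argument coincides with the paper's.
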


\begin{proof}

First, we note that a vector bundle $\mathcal{E}$ over $M$ has constant rank everywhere, since $M$ is connected.
Then the rank gives a splitting $K^0_{\rm orb}(M)=\widetilde{K}^0_{\rm orb}(M)\oplus\ZZ[{\bf 1}_M]$. 
Also, there is a T-duality isomorphism $T_1^{-1}:K^0(M_1)\cong K_1(C(\Omega)\rtimes\ZZ_X)\cong C(\Omega,\ZZ)^{\ZZ_X}$ by standard arguments; elements of $C(\Omega,\ZZ)^{\ZZ_X}$ are integer linear combinations of characteristic functions on $\ZZ_X$-invariant clopen subsets $S\subset\Omega$. In particular, the constant function $\Omega\mapsto 1$ corresponding to $[\xi]$ T-dualises to $[{\bf 1}_{M_1}]$ which generates $\ZZ[{\bf 1}_{M_1}]\subset K^0(M_1)$. Generally, a characteristic function on $S$ corresponds to a trivial line bundle over the subbundle of $M_1$ with fibre $S$. Since a non-zero $\mathcal{E}\rightarrow M$ is supported on all of $M$, the homomorphism $\iota_X^*$ lands on the subgroup $\ZZ[{\bf 1}_{M_1}]$, taking $[\mathcal{E}]$ to ${\rm rank}(\mathcal{E})[{\bf 1}_{M_1}]$. In summary,
\beq
    T_1\circ \iota_X^*([\mathcal{E}])={\rm rank}(\mathcal{E})[\xi].\label{tdualofrestriction}
\eeq
From Eq.\ \eqref{disorderTaction}, one deduces that
\beq
    \partial_X\circ T([\mathcal{E}])=\partial_X\big({\rm rank}(\mathcal{E})[\sP_\sigma]+\mathcal{C}\big)={\rm rank}(\mathcal{E})[\xi],\nonumber
\eeq
which together with \eqref{tdualofrestriction}, shows that the above diagram commutes. 

\end{proof}

A ``disorder-averaged'' cyclic 1-cocycle on $(C(\Omega)\rtimes \ZZ_X)^\infty$ is defined from a group 1-cocycle $a$ on $\ZZ_X$ as follows.
Let $f_j(x, \gamma) \in (C(\Omega)\rtimes \ZZ_X)^\infty$ for $j=0,1$. Then 
$$
{\rm tr}_{a, \mu}(f_0,f_1)=\sum_{\gamma_0\gamma_1=1} \int_\Omega f_0(x, \gamma_0)f_1(\gamma_0^{-1}x, \gamma_1)  d\mu(x) a(1, \gamma_1) 
$$
defines a cyclic 1-cocycle on $(C(\Omega)\rtimes \ZZ_X)^\infty$. Let $a$ be the group 1-cocycle on $\ZZ_X$ that represents the generator of $H^1(\ZZ_X,\ZZ)$, and gives rise to the winding number cyclic cocycle $\tau_w$ on $(C^*_r(\ZZ_X))^\infty$. Then $\phi\, a$ is a real-valued group cocycle on $\ZZ_X$.
Since 
$\tau_w([\zeta])=1$, we see that 
$${\rm tr}_{\phi a, \mu}([\xi])={\rm tr}_{\phi a, \mu}(\xi^{-1}, \xi)=\tau_w([\zeta])\int_\Omega d\mu(x)=\phi.$$ 

Together with Eq.\ \eqref{disorderHallcontribution} and ${\rm tr}_{K, \mu}\cong{\rm tr}_{c, \mu}$, we have
$$
{\rm tr}_{K, \mu}([\sP_\sigma])={\rm tr}_{c, \mu}(\sP_\sigma, \sP_\sigma, \sP_\sigma) = \phi={\rm tr}_{\phi a, \mu}([\xi])={\rm tr}_{\phi a, \mu}(\partial_X[\sP_\sigma]),
$$
so that taking the boundary conductance cocycle to be ${\rm tr}_{\phi a, \mu}$ yields the equality of bulk and boundary conductance.

\begin{remark}
To define an extension of $C(\Omega)\rtimes_\sigma\Gamma$ by $C(\Omega)\rtimes \ZZ_X$, we need (besides $\partial_X$ above) to also define the bulk-boundary map $\alpha_X$ in the other $K$-theory degree. Furthermore, it is not clear that $K_*(C(\Omega)\rtimes_\sigma\Gamma), K_*(C(\Omega)\rtimes \ZZ_X)$ are torsion-free, so there may be some further nonuniqueness of the extension due to the ${\rm Ext^1_\ZZ}$ term in the UCT (although this is not a problem if $\Gamma$ acts through the amenable group $\mathscr{A}=\ZZ^2$ or $\ZZ^3$, see \cite{BM18}). Nevertheless, these freedoms are independent of the boundary map $\partial_X$ that we are studying in this Section, so we leave open the specific choice of extension.
\end{remark}

\begin{remark}
For plateaux formation in quantum Hall effects, one should work in the regime of ``strong disorder'' or a ``gap of extended states'', which involves passing to noncommutative Sobolev spaces \cite{Bellissard,Prodan,PSB}. In the hyperbolic plane and for torsion-free $\Gamma=\Gamma_g$, this was studied for bulk phases in \cite{CHM}. The bulk-boundary correspondence in a gap of extended states remains a difficult problem even in the Euclidean case, and we intend to study this in a future work.
\end{remark}

\section{The time-reversal invariant topological insulator on the hyperbolic plane}\label{section:hyperbolicKM}
The following stable splitting lemma is useful for computing the complex and real $K$-theories of $\Sigma_g$:
\begin{lemma}\label{lemma:stablesplitting}
$\Sigma_g$ is stably homotopic to the wedge sum of $2g$ circles and a 2-sphere.
\end{lemma}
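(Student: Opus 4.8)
The plan is to exhibit an explicit stable splitting by building $\Sigma_g$ from its standard CW structure and splitting off the cells after one suspension. Recall that $\Sigma_g$ has a CW decomposition with one $0$-cell, $2g$ $1$-cells (the loops $l_{A_i}, l_{B_i}$), and one $2$-cell attached along the product of commutators $[A_1,B_1]\cdots[A_g,B_g]$. The $1$-skeleton is a wedge $\bigvee_{i=1}^g (S^1_{A_i}\vee S^1_{B_i})$ of $2g$ circles, and $\Sigma_g$ is the mapping cone of the attaching map $\varphi\colon S^1\to \bigvee_{2g} S^1$ representing that commutator word. The key point is that $\varphi$ induces the zero map on reduced homology (the commutator word is trivial in $H_1$), so after one suspension it becomes null-homotopic.

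The main steps, in order: (1) Identify $\Sigma_g$ as the mapping cone $C_\varphi$ of $\varphi\colon S^1\to X$ where $X=\bigvee_{2g}S^1$. (2) Observe $\widetilde H_*(\varphi)=0$ since the image of the fundamental class of $S^1$ in $H_1(X)=\ZZ^{2g}$ is the class of $[A_1,B_1]\cdots[A_g,B_g]$, which is zero in the abelianisation. (3) Apply the suspension: $\Sigma\varphi\colon S^2\to \Sigma X=\bigvee_{2g}S^2$ is a map between simply connected spaces inducing zero on homology, hence by the Hurewicz theorem and the fact that $S^2$ and $\bigvee S^2$ are simply connected Moore-type spaces (or simply because $[S^2,\bigvee_{2g}S^2]=\pi_2(\bigvee_{2g}S^2)=H_2=\ZZ^{2g}$ and $\Sigma\varphi$ is zero there), $\Sigma\varphi\simeq *$. (4) Conclude that $\Sigma C_\varphi\simeq C_{\Sigma\varphi}\simeq \Sigma X\vee S^3$; unravelling, $\Sigma\Sigma_g\simeq \bigvee_{2g}S^2\vee S^3$. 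Desuspending this statement back to the level of stable homotopy type gives that $\Sigma_g$ is stably equivalent to $\bigvee_{2g}S^1\vee S^2$, as claimed.

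An alternative and perhaps cleaner route, if one wants to avoid fussing over basepoints and the precise model of the suspension of a mapping cone, is to invoke the general principle that a simply connected (or, after suspension, simply connected) CW complex with free homology and cells only in dimensions where the homology is concentrated splits stably as a wedge of Moore spaces — here spheres, since all homology is free. Concretely: $\Sigma_g$ has $\widetilde H_0=0$, $H_1=\ZZ^{2g}$, $H_2=\ZZ$, all free, and the only possible obstruction to a stable wedge decomposition lies in the attaching data between the $1$-cells and the $2$-cell, which is detected by the Hurewicz image of the attaching map; that image vanishes as in step (2). One can phrase this via the fact that the Spanier--Whitehead / stable category is additive and the cofiber sequence $S^1\xrightarrow{\varphi} \bigvee_{2g}S^1\to \Sigma_g$ splits stably once $\varphi$ is stably null, which it is.

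The main obstacle is purely bookkeeping: verifying that $\varphi$ (equivalently $\Sigma\varphi$) is stably null-homotopic, i.e.\ that the commutator attaching word carries no stable secondary information. This is where one must be slightly careful — a priori a map $S^1\to\bigvee_{2g}S^1$ inducing zero on $H_1$ need not be null-homotopic unstably (indeed $\varphi$ is highly nontrivial unstably, which is exactly why $\Sigma_g$ is not homotopy equivalent to the wedge), but stably $\pi_1^s$ of a wedge of circles is detected by homology in this range, so $\Sigma\varphi=0$. Spelling this out, perhaps by noting that $[S^2,\bigvee_{2g}S^2]\cong \bigoplus_{2g}\pi_2(S^2)\cong\ZZ^{2g}$ is a free abelian group computed by the Hurewicz map, is the one place the argument needs a genuine (if elementary) input rather than a formal manipulation.
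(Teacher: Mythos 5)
Your proposal is correct and follows essentially the same route as the paper: suspend once, observe that the suspended commutator attaching map $S^2\to\bigvee_{2g}S^2$ is null-homotopic (the paper says this because $\pi_2$ is abelian so commutators die; you say it via the Hurewicz isomorphism $\pi_2(\bigvee_{2g}S^2)\cong H_2\cong\ZZ^{2g}$ and the vanishing of the attaching word in homology — the same fact), and conclude $S\Sigma_g\simeq(\bigvee_{2g}S^2)\vee S^3$, hence the stable splitting. Your extra care in distinguishing the unstable nontriviality of $\varphi$ from its stable triviality is a useful clarification but not a different argument.
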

\begin{proof}
The 1-skeleton of $\Sigma_g$ is $\bigvee_{i=1}^{2g}S^1$, and the attaching map of its 2-cell is the product of commutators of inclusions $S^1\hookrightarrow S^1$ in accordance with the fundamental polygon. For the suspension $S\Sigma$ of $\Sigma_g$, the 3-cell is attached null-homotopically to the 2-skeleton $\bigvee_{i=1}^{2g}S^2$ since $\pi_2$ is abelian. Thus $S\Sigma\simeq (\bigvee_{i=1}^{2g}S^2)\vee S^3\simeq S(S^2\bigvee_{i=1}^{2g}S^1)$.
\end{proof}

A fermionic time-reversal symmetry $\mathsf{T}$ has the property that it is antiunitary and $\mathsf{T}^2=-1$. It is assumed to act pointwise in space, whether Euclidean or hyperbolic, and to commute with (magnetic) translations. This restricts the 2-cocycle $\sigma$ to be ${\rm O}(1)$ instead of ${\rm U}(1)$-valued, and a realistic situation under this condition is that of zero magnetic field, $\theta=0, \sigma\equiv 1$, rather than e.g.\ $\theta=\frac{1}{2}$ in Eq.\ \eqref{multiplier}. In the absence of time-reversal symmetry, the arguments in Section \ref{section:IQHE} led us to classify spectral projections of $\Gamma_g$-invariant Hamiltonians using $K_0(C^*_r(\Gamma_g))$ or equivalently $K_0(C^*_{r,\RR}(\Gamma_g)\otimes_\RR\CC)$, where $C^*_{r,\RR}(\Gamma_g)$ denotes the reduced group $C^*$-algebra of $\Gamma_g$ over the \emph{reals}. Both the operator $\mathsf{T}$ and the complex scalars $\CC$ commute with $C^*_{r,\RR}(\Gamma_g)$, but $\mathsf{T}{\rm i}=-{\rm i}\mathsf{T}$ means that there is an action of the quaternions $\mathbf{H}$ which the $\mathsf{T}$-symmetric Hamiltonians must be compatible with. The upshot is that we need to compute instead \cite{Thiang}
\beq
KO_0(C^*_\RR(\Gamma_g)\otimes_\RR\mathbf{H})\cong KO_4(C^*_\RR(\Gamma_g)),\label{KOgroup}
\eeq
with a degree-4 shift \cite{Kar}. By the real Baum--Connes isomorphism \cite{Baum} and Poincar\'{e} duality, this is 
$$KO_4(\Sigma_g)\cong KO^{-2}(\Sigma_g)\cong \wt{KO}^{-2}(S^0\vee\Sigma_g).$$
In $\wt{KO}$-theory, we can replace $S^0\vee\Sigma_g$ by a stably homotopic space, which by Lemma \ref{lemma:stablesplitting} is $S^0\vee S^2\bigvee_{i=1}^{2g}S^1$. Then
\begin{align*}
KO^\bullet(\Sigma_g)&\cong \wt{KO}^\bullet(S^0)\oplus\wt{KO}^\bullet(S^2)\bigoplus_{i=1}^{2g}\wt{KO}^\bullet(S^1)\\
&\cong KO^\bullet({\rm pt})\oplus KO^{\bullet-2}({\rm pt})\bigoplus_{i=1}^{2g}KO^{\bullet-1}({\rm pt}).
\end{align*}
\begin{definition}\label{def:hypKM}
The $K$-theoretic hyperbolic Kane--Mele invariant is the $\ZZ_2\subset KO^{-2}(\Sigma_g)$ corresponding to $\wt{KO}^{-2}(S^0)\cong KO^{-2}({\rm pt})\cong \ZZ_2$.
\end{definition}

Another method to compute $KO_\bullet(C^*_\RR(\Gamma_g))\cong KO_\bullet(\Sigma_g)$ is to use Poincar\'{e} duality $KO_\bullet(\Sigma_g)\cong KO^{2-\bullet}(\Sigma_g)$ and compute the latter using the Atiyah--Hirzebruch spectral sequence. The $E^2$ term is $H^p(\Sigma_g,KO^q({\rm pt}))$, so for $\bullet=4$ we have the terms
$$H^0(\Sigma_g,KO^{-2}({\rm pt}))\oplus H^2(\Sigma_g,KO^{-4}({\rm pt}))\cong\ZZ_2\oplus\ZZ$$
The $\ZZ_2$ indeed splits off in $KO^{-2}(\Sigma_g)$ as it arises from the inclusion of a point.

\begin{remark}
In the Euclidean case, the $\ZZ_2$ subgroup in $KO^{-2}(\TT^2)$ $=\wt{KO}^{-2}(S^0\vee\TT^2)$ corresponds under T-duality to $\wt{KR}^{-4}(\TT^2,\varsigma)\cong\ZZ_2$ \cite{MT2}, where the latter $\TT^2$ is the Brillouin torus equipped with the momentum reversal involution $k\mapsto -k$. This $\ZZ_2$ invariant was originally identified as the $K$-theoretic Kane--Mele invariant \cite{KM} in \cite{Kitaev}, and justifies the terminology of our hyperbolic generalisation in Definition \ref{def:hypKM}. Indeed $\wt{KR}^{-4}(\TT^2,\varsigma)$ was computed in \cite{Kitaev} using the Baum--Connes isomorphism, although it can also be computed directly using an equivariant stable splitting of $(\TT^2,\varsigma)$ \cite{FM}. In the hyperbolic case, we do not have momentum ``space'' in the usual sense, but Eq.\ \eqref{KOgroup} is still available.
\end{remark}

{\bf Towards fractional Kane--Mele indices.} When $\vectnu$ is nonempty, the computation of $KO^{-2}_{\rm orb}(\Sigma_{g,\vectnu})$ appears more difficult. Nevertheless, there is still a $\ZZ_2$-factor coming from inclusion of a point. In the absence of $\mathsf{T}$, the geometry of $\Sigma_{g,\vectnu}$ causes the pairing of the area cocycle with $K_0(C^*_r(\G,\sigma))$ to acquire a geometrical factor $\phi$ to become $\phi\ZZ\subset\RR$-valued. Typically, torsion $K$-theory classes pair trivially with cyclic cocycles. Nevertheless, in the presence of $\mathsf{T}$ and in the Euclidean case (as well as with disorder), modified cyclic cocycles were considered in \cite{Kellendonk4} which resulted in $\ZZ$-valued pairings with $KO_0(C(\Omega,{\mathbf H})\rtimes(\ZZ^2))$ well-defined modulo 2. We anticipate that the hyperbolic generalisation of \cite{Kellendonk4} along the lines of this paper will lead to fractional Kane--Mele ($\ZZ_2$) indices. For the bulk-boundary map, the extension theory and UCT in the real case was given in \cite{MR88}. We intend to also develop a fractional bulk-boundary correspondence of Kane--Mele type indices in a future work.

\bigskip

{\em Acknowledgements.}  VM thanks A.\ Carey, K. Hannabuss, M. Marcolli for past collaborations on QHE on the hyperbolic plane. GCT also thanks K.\ Gomi for some interesting discussions. This work was supported by the Australian Research Council via ARC Discovery Project grant DP150100008, Australian Laureate Fellowship  FL170100020 and DECRA DE170100149.

\appendix
\section{Appendix: Kasparov spectral sequence}
Let $\cA$ be a unital $C^*$-algebra with a $\Gamma_g$-action. According to Kasparov, section 6.10 in \cite{Kas88}, there is a spectral sequence $(E^r, d^r)$, generalising the PV-sequence, that converges to 
 $K_0(\cA\rtimes_{\sigma} \Gamma_g)$ with $E^2$ term equal to 
 \begin{equation}
 H_0(\Gamma_g, K_0(\cA)) \oplus H_1(\Gamma_g, K_1(\cA)) \oplus H_2(\Gamma_g, K_0(\cA)), \nonumber 
 \end{equation}
 where the factors are group homologies with coefficients in the induced $\Gamma_g$-module $K_i(\cA)$ \cite{Brown}. Applying Poincar\'{e} duality to the last term gives
 $$H_0(\Gamma_g, K_0(\cA)) \oplus H_1(\Gamma_g, K_1(\cA)) \oplus H^0(\Gamma_g, K_0(\cA)),$$ which simplifies to
 \beq\label{kasp:disorder}
 K_0(\cA)_{\Gamma_g} \oplus H_1(\Gamma_g, K_1(\cA)) \oplus K_0(\cA)^{\Gamma_g}
 \eeq
 where $K_0(\cA)_{\Gamma_g}$ denotes the coinvariants and $K_0(\cA)^{\Gamma_g}$ the invariants \cite{Brown}. 

 \subsection{Cantor disorder space with minimal action of $\Gamma_g$}
As an example, let $\cA=C(\Omega)$ with $\Omega$ a Cantor set. Suppose $\Omega$ is equipped with a mimimal action of $\Gamma_g$, and let $\sigma$
be a multiplier on $\Gamma_g$. Let us mention that there exist examples of minimal actions of $\Gamma_g$ on a Cantor set \cite{Hjorth}.

Using Eq.\ \eqref{cantorKgroups} and Eq.\ \eqref{kasp:disorder}, we see that the differentials in the Kasparov spectral sequence vanish in this special case, so that
\beq
K_0(C(\Omega) \rtimes_\sigma \Gamma_g)\cong C(\Omega, \Z)_{\Gamma_g} \oplus \Z,\nonumber
\eeq
where $C(\Omega, \Z)_{\Gamma_g} $ are the co-invariants under the $\Gamma_g$-action, 
and since $\Gamma_g$ acts minimally on $\Omega$, so $C(\Omega, \Z)^{\Gamma_g} \cong \ZZ$. More precisely,
\begin{itemize}
        \item    The natural inclusion
                $C^{*}_{r}(\Gamma_g, \sigma)\hookrightarrow C(\Omega)\rtimes_\sigma\Gamma_g\,$ takes $\cP_\sigma$ to $\sP_\sigma$, and $[\sP_\sigma]$ generates the $\Z$
factor in $K_0(C(\Omega)\rtimes_\sigma\Gamma_g)$.
                    \item   The inclusion $C(\Omega,\Z)_{\Gamma_g} \hookrightarrow K_0(C(\Omega)\rtimes_\sigma\Gamma_g)$
is induced by
                the inclusion $C(\Omega)\hookrightarrow C(\Omega)\rtimes_\sigma\Gamma_g$.
\end{itemize}
Thus we have
\beq
K_0(C(\Omega) \rtimes_\sigma \Gamma_g)\cong C(\Omega, \Z)_{\Gamma_g} \oplus \Z[\sP_\sigma].\nonumber
\eeq

\begin{remark}
In the above case with $\Gamma=\Gamma_g$, the kernel of ${\rm tr}_{c, \mu}$ may be identified with the subgroup $C(\Omega,\ZZ)_{\Gamma_g}$ of $K_0(C(\Omega)\rtimes_\sigma\Sigma_g)$ as follows. Elements of the latter subgroup has representative functions supported at the identity element of $\Sigma_g$. Such functions are killed by the derivations $\delta_j$ in the definition of ${\rm tr}_{c, \mu}$, so $C(\Omega,\ZZ)_{\Gamma_g}$ is in the kernel of ${\rm tr}_{c, \mu}$.
\end{remark}

We are also interested in the case of $\Gamma=\G$ which has torsion elements. The computation of the $K_\bullet(C(\Omega)\rtimes_\sigma\G)$ becomes more involved, and we leave this for a future work.


\end{document}